\mathchardef\mhyphen="2D
\newcommand{\href}[2]{{#2}}
\newcommand{\argmin}[1]{\underset{#1}{\operatorname{argmin}}\text{ }}
\newcommand{\median}[1]{\underset{#1}{\operatorname{median}}\text{ }}
\newcommand{\Sec}[1]{\hyperref[sec:#1]{\S\ref*{sec:#1}}} 
\newcommand{\App}[1]{\hyperref[sec:#1]{Appendix~\ref*{sec:#1}}} 
\newcommand{\Eqn}[1]{\hyperref[eq:#1]{{\rm (\ref*{eq:#1})}}} 
\newcommand{\Part}[1]{\hyperref[part:#1]{(\ref*{part:#1})}} 
\newcommand{\Fig}[1]{\hyperref[fig:#1]{Figure~\ref*{fig:#1}}} 
\newcommand{\Tab}[1]{\hyperref[tab:#1]{Table~\ref*{tab:#1}}} 
\newcommand{\Thm}[1]{\hyperref[thm:#1]{Theorem~\ref*{thm:#1}}} 
\newcommand{\Lem}[1]{\hyperref[lem:#1]{Lemma~\ref*{lem:#1}}} 
\newcommand{\Prop}[1]{\hyperref[prop:#1]{Proposition~\ref*{prop:#1}}} 
\newcommand{\Cor}[1]{\hyperref[cor:#1]{Corollary~\ref*{cor:#1}}} 
\newcommand{\Def}[1]{\hyperref[def:#1]{Definition~\ref*{def:#1}}} 
\newcommand{\Alg}[1]{\hyperref[alg:#1]{Algorithm~\ref*{alg:#1}}} 
\newcommand{\Ex}[1]{\hyperref[ex:#1]{Example~\ref*{ex:#1}}} 
\newcommand{\As}[1]{\hyperref[as:#1]{Assumption~{\rm\ref*{as:#1}}}} 
\newcommand{\Reg}[1]{\hyperref[as:#1]{Condition~\ref*{reg:#1}}} 
\newcommand{\AlgLine}[2]{\hyperref[alg:#1]{line~\ref*{line:#2} of Algorithm~\ref*{alg:#1}}}
\newcommand{\AlgLines}[3]{\hyperref[alg:#1]{lines~\ref*{line:#2}--\ref*{line:#3} of Algorithm~\ref*{alg:#1}}}
\newcommand{\Sec}[1]{{\S\ref{sec:#1}}} 
\newcommand{\App}[1]{{Appendix~\ref{sec:#1}}} 
\newcommand{\Eqn}[1]{{(\ref{eq:#1})}} 
\newcommand{\Part}[1]{{(\ref{part:#1})}} 
\newcommand{\Fig}[1]{{Figure~\ref{fig:#1}}} 
\newcommand{\Tab}[1]{{Table~\ref{tab:#1}}} 
\newcommand{\Thm}[1]{{Theorem~\ref{thm:#1}}} 
\newcommand{\Lem}[1]{{Lemma~\ref{lem:#1}}} 
\newcommand{\Prop}[1]{{Proposition~\ref{prop:#1}}} 
\newcommand{\Cor}[1]{{Corollary~\ref{cor:#1}}} 
\newcommand{\Def}[1]{{Definition~\ref{def:#1}}} 
\newcommand{\Alg}[1]{{Algorithm~\ref{alg:#1}}} 
\newcommand{\Ex}[1]{{Example~\ref{ex:#1}}} 
\newcommand{\Reg}[1]{{R~\ref*{reg:#1}}} 
\newcommand{\Real}{\mathbb{R}}
\newcommand{\Tra}{^{\sf T}} 
\newcommand{\Inv}{^{-1}} 
\newcommand{\amp}{\mathop{\:\:\,}\nolimits}
\newcommand{\V}[1]{{\bm{\mathbf{\MakeLowercase{#1}}}}} 
\newcommand{\VE}[2]{\MakeLowercase{#1}_{#2}} 
\newcommand{\Vtilde}[1]{{\bm{\tilde \mathbf{\MakeLowercase{#1}}}}} 
\newcommand{\Vn}[2]{\V{#1}^{(#2)}} 
\newcommand{\VnE}[3]{{#1}^{(#2)}_{#3}} 
\newcommand{\M}[1]{{\bm{\mathbf{\MakeUppercase{#1}}}}} 
\newcommand{\ME}[2]{\MakeLowercase{#1}_{#2}} 
\numberwithin{equation}{section}
\theoremstyle{plain}
\newtheorem{theorem}{Theorem}[section]
\newtheorem{proposition}{Proposition}[section]
\begin{document}

\begin{frontmatter}
\title{Estimating a Common Period for a Set of Irregularly Sampled Functions with Applications to Periodic Variable Star Data}
\runtitle{Estimating a Common Period}

\begin{aug}
\author{\fnms{James P.} \snm{Long}\thanksref{m1}\ead[label=e1]{jlong@stat.tamu.edu}},
\author{\fnms{Eric C.} \snm{Chi}\thanksref{m2}\ead[label=e2]{echi@rice.edu}}
\and
\author{\fnms{Richard G.} \snm{Baraniuk}\thanksref{m2}
\ead[label=e3]{richb@rice.edu}}


\runauthor{J. P. Long et al.}

\affiliation{Texas A\&M University\thanksmark{m1} and Rice University\thanksmark{m2}}

\address{J. P. Long\\
Department of Statistics\\
Texas A\&M University\\
College Station, TX 77843\\
USA\\
\printead{e1}}

\address{E. C. Chi\\
Department of Electrical \\
and Computer Engineering\\
Rice University\\
Houston, TX 77005\\
USA \\
\printead{e2}}

\address{R. G. Baraniuk\\
Department of Electrical\\
and Computer Engineering\\
Rice University\\
Houston, TX 77005\\
USA \\
\printead{e3}}
\end{aug}

\begin{abstract}
We consider the estimation of a common period for a set of functions sampled at irregular intervals. The problem arises in astronomy, where the functions represent a star's brightness observed over time through different photometric filters. While current methods can estimate periods accurately provided that the brightness is well--sampled in at least one filter, there are no existing methods that can provide accurate estimates when no brightness function is well--sampled. In this paper we introduce two new methods for period estimation when brightnesses are poorly--sampled in all filters. The first, multiband generalized Lomb-Scargle (MGLS), extends the frequently used Lomb-Scargle method in a way that na\"{i}vely combines information across filters. The second, penalized generalized Lomb-Scargle (PGLS),  builds on the first by more intelligently borrowing strength across filters. Specifically, we incorporate constraints on the phases and amplitudes across the different functions using a non--convex penalized likelihood function. We develop a fast algorithm to optimize the penalized likelihood by combining block coordinate descent with the majorization-minimization (MM) principle. We illustrate our methods on synthetic and real astronomy data. Both advance the state-of-the-art in period estimation; however, PGLS significantly outperforms MGLS when all functions are extremely poorly--sampled.
\end{abstract}

\begin{keyword}[class=MSC]
\kwd[Primary ]{60K35}
\kwd{60K35}
\kwd[; secondary ]{60K35}
\end{keyword}

\begin{keyword}
\kwd{astrostatistics}
\kwd{penalized likelihood}
\kwd{period estimation}
\kwd{functional data}
\kwd{MM algorithm}
\kwd{block coordinate descent}
\end{keyword}

\end{frontmatter}

\section{Introduction}
\label{sec:introduction}

Periodic variable stars play an important role in several areas of modern astronomy, including extragalactic distance determinations and estimation of the Hubble constant \citep{shappee2011new,riess20113}. To effectively use periodic variables, astronomers need accurate period estimates. For instance, in classification studies, periodic variables are assigned labels representing the underlying astrophysical reason for brightness variation. Period is one of the most useful features for determining a star's class, and incorrect period estimates are a leading cause of misclassifications \citep{richards2011machine,dubath2011random}. 

Astronomers estimate periods using the {\em light curve} of a star. A light curve is a set of brightness measurements of a star taken over time. Many astronomical surveys measure the brightness of stars in several photometric filters, or bands. Multiband data is useful because differences in brightness across bands is a strong indicator of stellar class. \Fig{lc_unfold} displays the light curve of the periodic variable \texttt{OGLE-LMC-T2CEP-041} observed in the I band (orange $\times$) and V band (blue $\circ$) by the Optical Graviational Lensing Experiment (OGLE) \citep{udalski2008optical}. This star has been observed 702 times in the I band and 76 times in the V band over the course of roughly 4000 days. The intervals between brightness measurements are irregular and the star is observed at different times in the different bands. This is typical for light curves. Many stars are behind the sun for part of the year, leading to months long gaps between observations. Additionally, weather can disrupt planned observation times. The vertical bars around each point (not always visible due to being very small) are one-standard deviation uncertainty measurements on brightness. The size of the error bars relative to the amount of variation in brightness demonstrate that \texttt{OGLE-LMC-T2CEP-041} is a variable star.

\texttt{OGLE-LMC-T2CEP-041} is a periodic variable with period of approximately 2.48 days. The pattern of variation in \texttt{OGLE-LMC-T2CEP-041} can be observed by plotting the brightness of the star versus phase (time modulo period). This is known as the {\em folded} or {\em phased light curve}. \Fig{lc_fold} 
displays the folded light curve in each band.

\begin{figure}[t]
\begin{center}
\centering
 \subfloat[Light Curve]{\label{fig:lc_unfold}
 \includegraphics[scale=.4]{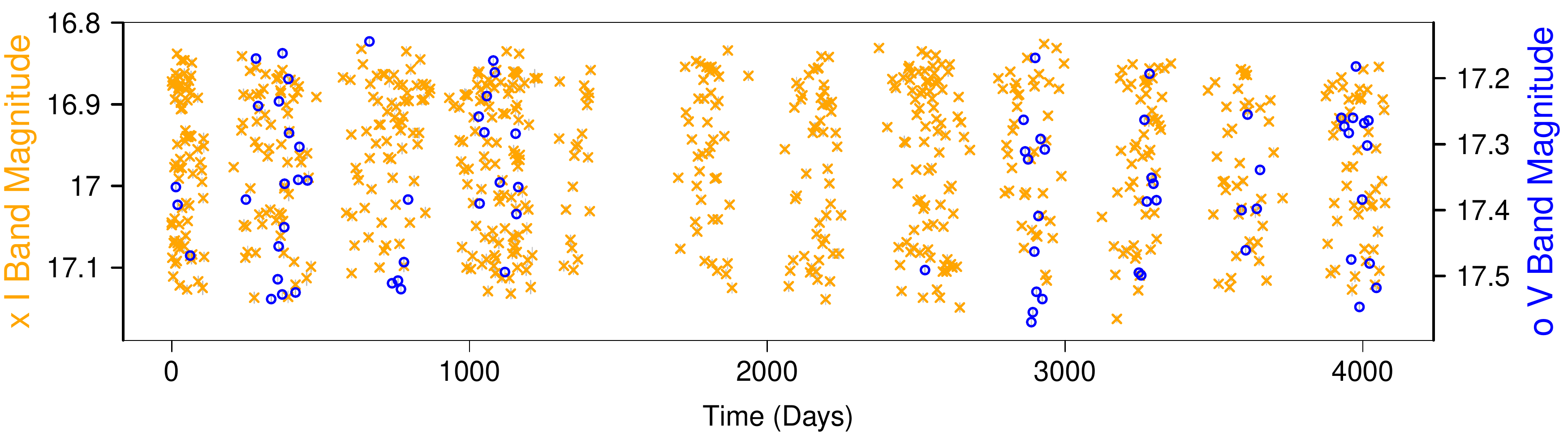}
 } \\
 \subfloat[Folded Light Curve]{\label{fig:lc_fold}
 \includegraphics[scale=.4]{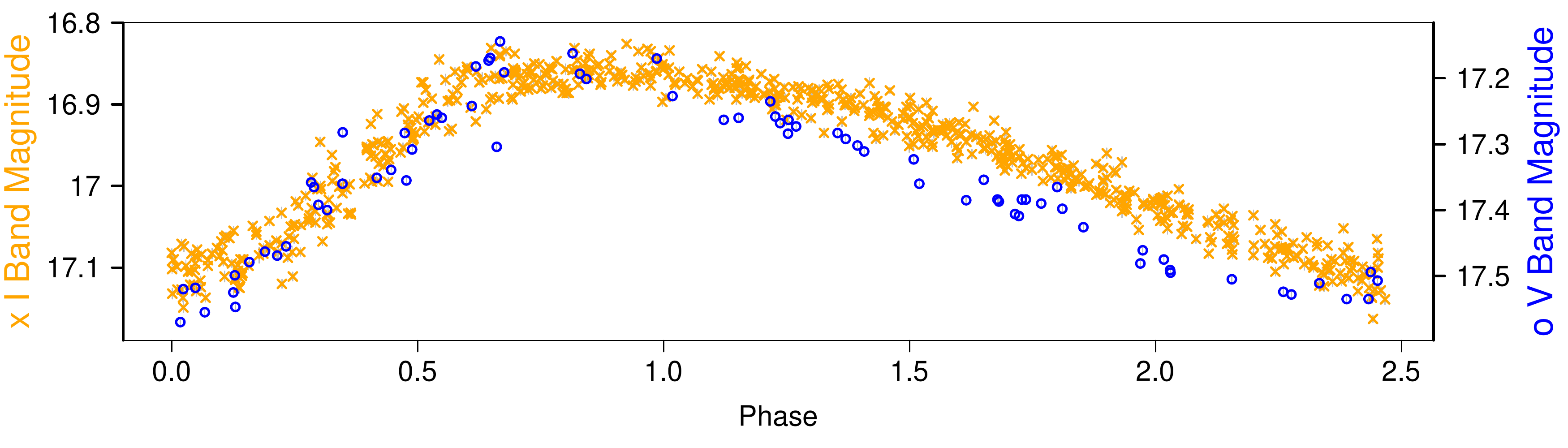}
 }
 \caption{(a) The light curve of the star \texttt{OGLE-LMC-T2CEP-041} in the I band (orange $\times$) and V band (blue $\circ$). Given these measurements, astronomers seek to estimate the common period (in this case 2.48 days) of the I and V bands. 
(b) The pattern of brightness variation becomes apparent when we plot brightness (i.e. magnitude) versus time modulo period. This is known as the folded light curve. Note, the phase of maximum brightness in the I band and V band are similar. In this work we propose using correlations between phase of maximum brightness in different bands in order to improve period estimation.}
\end{center}
\end{figure}

Accurate estimation of \texttt{OGLE-LMC-T2CEP-041}'s period is fairly easy, because the star has been observed hundreds of times in the I band. However some astronomical surveys collect many fewer measurements per band. For example, in Stripe 82, the Sloan Digital Sky Survey I (SDSS--I) collected light curves for $\sim 700,000$ stars in five bands with a median of 10 observations per band. \cite{sesar2007exploring} identified several hundred of these stars as candidate periodic variables belonging to the class RR Lyrae but could not estimate periods due to the lack of available methodology for estimating periods with poorly--sampled light curves. Later, the Sloan Digital Sky Survey II (SDSS--II) roughly tripled the number of observations per light curve to a median of 30 observations per band. In a follow-up study, \cite{sesar2010light} used this expanded set of data to estimate periods. More recently, the PanSTARRS1 survey has collected five band variable star data with a median of 4 observations per band \citep{schlafly2012photometric}. In this article, we develop methology for estimating periods with this quality of data. Historical data consisting of well--observed light curves, such as SDSS-II, enables testing of algorithms on real data.



Period estimation is challenging when stars are poorly observed in multiple bands, because one must model brightness variation for several functions, thus using many degrees of freedom. For example, consider a model with $r$ parameters to describe the shape of the light curve in each band. If one collects data in $5$ bands with $10$ observations per band, then the model will have $5r + 1$ parameters ($1$ for the period), constrained by a total of 50 observations. In contrast with 50 observations all taken in a single band, there are only $r + 1$ parameters to fit.

Light curve shapes across different bands, however, are not independent. For example, periodic variables typically reach their peak brightness at similar points in phase space in each band (see \Fig{lc_fold}). Enforcing such physical constraints in models may improve the accuracy of period estimation procedures by reducing the effective degrees of freedom that must be used to fit the curves. We now present an illustrative example that confirms this intuition and motivates our strategy for borrowing strength across multiple photometric bands.

As a test, we use a sample of well--observed ($\geq$ 50 measurements / band) periodic variable stars observed in the I and V bands collected by the OGLE survey \citep{udalski2008optical}. We estimate their periods using a simple multiband extension of Generalized Lomb--Scargle (GLS). GLS models the brightness variation as sinusoidal and finds the best fitting period (see Section \ref{sec:period_est} for details). These periods are nearly correct for every star, because the light curves are well--observed. \Fig{phase_ests_ls_rich}
 shows the GLS phase estimates in the I and V bands. Phase is the location (here measured on a $[-\pi,\pi)$ scale) in the phased light curve where the brightness reaches a maximum.\footnote{A light curve with a phase of $-\pi$ is brightest at time modulo period equals 0. A light curve with a phase of $0$ is brightest at time modulo period equals half the period.} From the plot, it is evident that there is a strong correlation between the I and V band phases. This was also evident for \texttt{OGLE-LMC-T2CEP-041} in \Fig{lc_fold}, 
   where both I and V bands peaked near phase $-\pi/3$.

Now consider downsampling these light curves to 10 measurements in both the I and V bands. The data now resemble the quality, in terms of number of photometric measurements per band, of SDSS-I. We again apply the GLS algorithm and plot the phase estimates in \Fig{phase_ests_ls_poor}. 
 The points are colored and marked by whether the period is estimated to within 1\% of its true value. By this measure, 36.5\%
 \unskip of the periods are estimated incorrectly. Note that the phase estimates do not appear strongly correlated in the two bands. The GLS algorithm does not force the phases to be similar in different bands. An algorithm that uses the known phase correlations may be able to estimate periods more accurately. In \Fig{phase_ests_bcd_poor}
  we plot the phase estimates using a modified GLS algorithm, termed Penalized GLS (PGLS), that we develop in Section \ref{sec:est}. PGLS enforces phases constraints across the bands using a penalized likelihood. The result is tightly correlated phase estimates that are physically realistic. Importantly, incorrect period estimates have fallen to 31.5\%
 \unskip from \unskip.

\begin{figure}[t]
\centering
 \subfloat[GLS on well--observed.]{\label{fig:phase_ests_ls_rich}
\includegraphics[scale=.3]{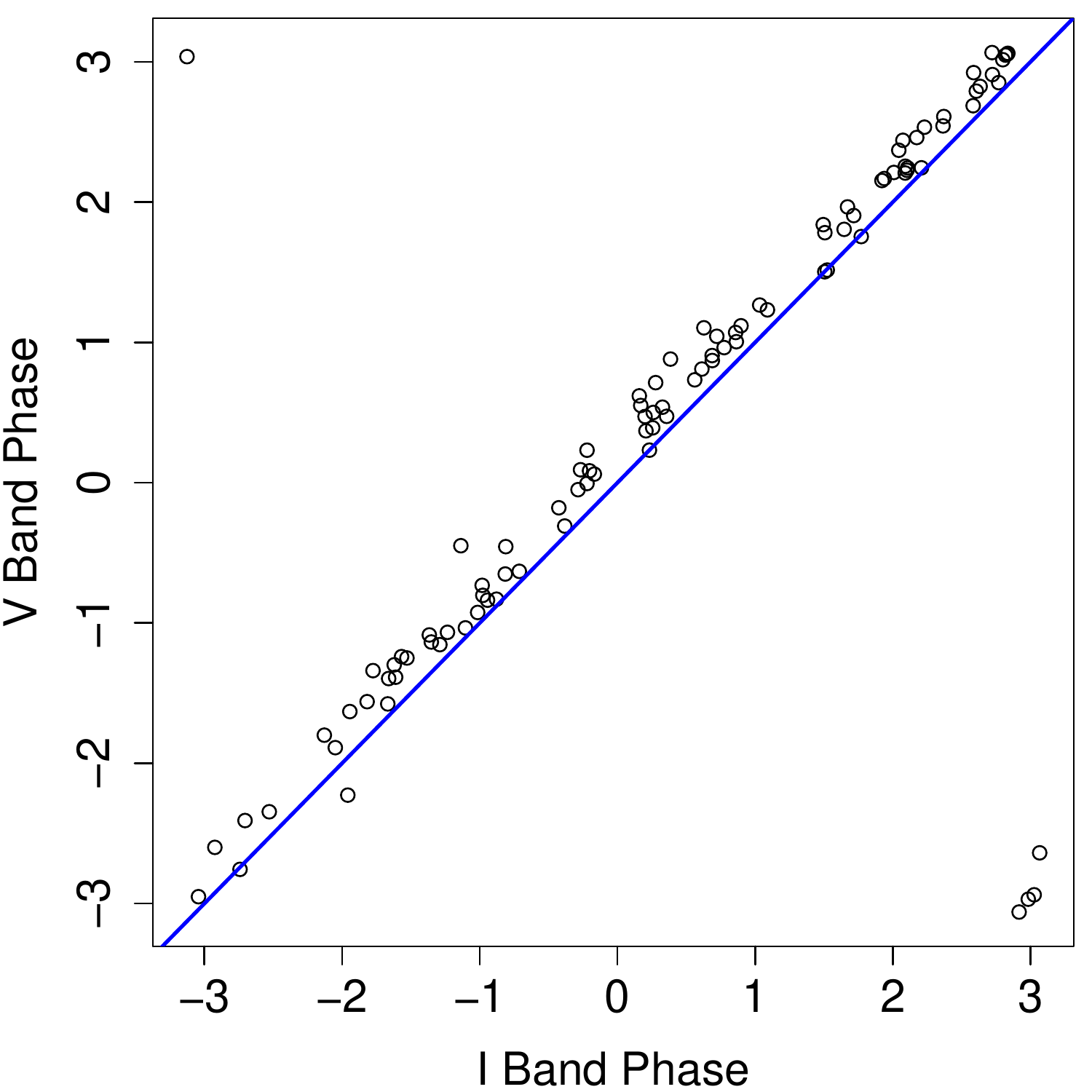}
 } 
 \subfloat[GLS on poorly observed.]{\label{fig:phase_ests_ls_poor}
\includegraphics[scale=.3]{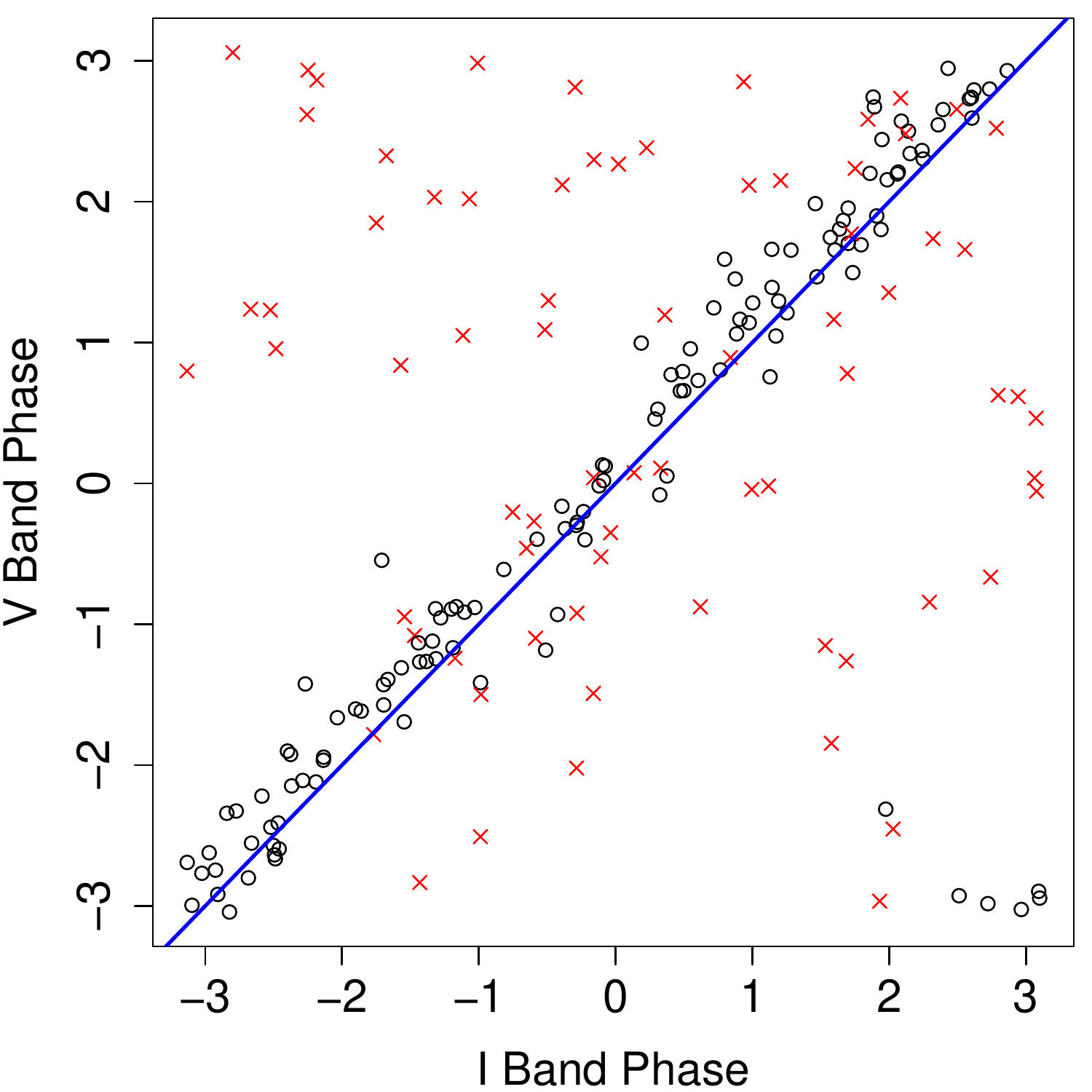}
}
 \subfloat[PGLS on poorly observed.]{\label{fig:phase_ests_bcd_poor}
\includegraphics[scale=.3]{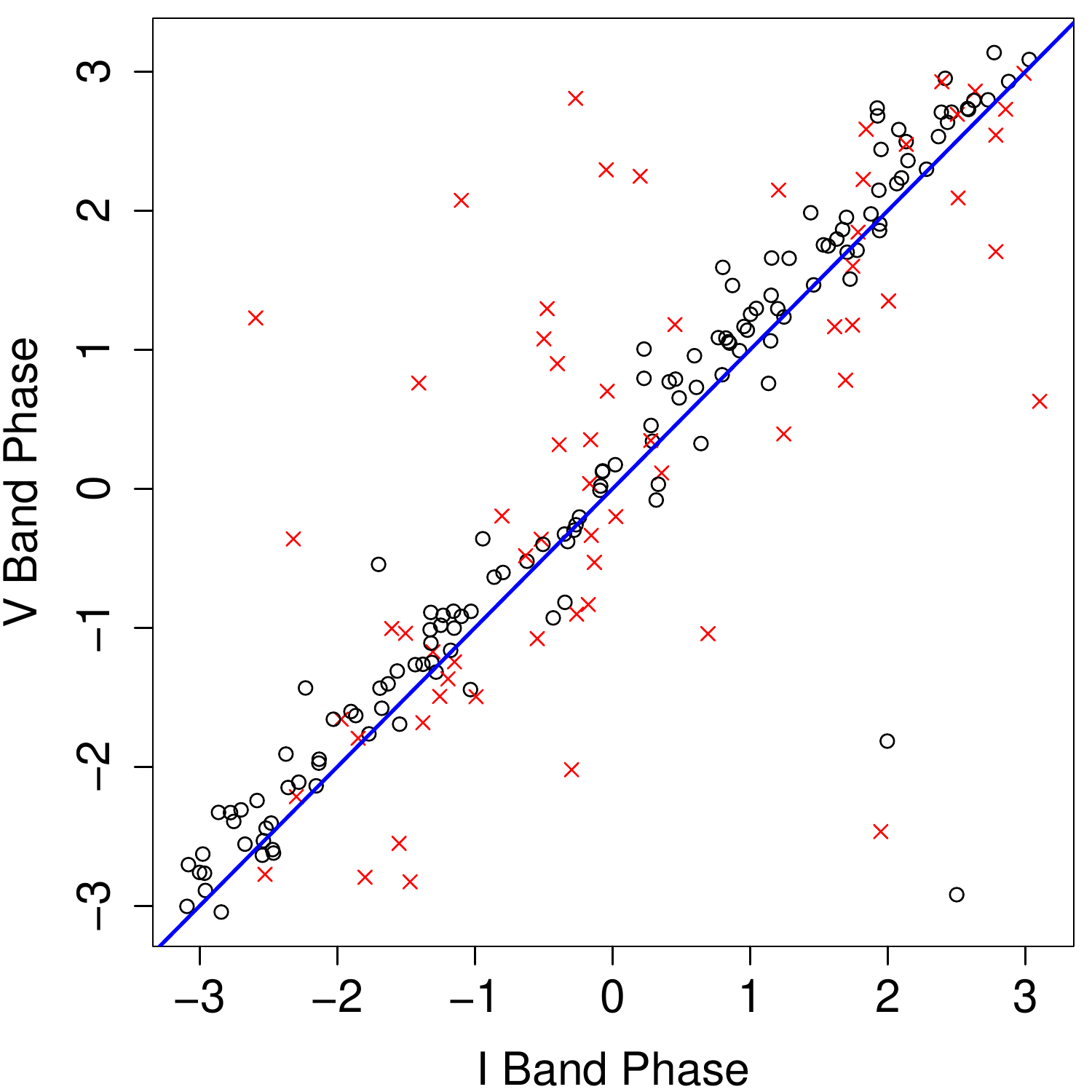}
}
 \caption{(a) Phase estimates using GLS period estimates on light curves with many observations in both I and V bands. (b) When the light curves are downsampled to 10 observations per band, GLS produces phase estimates that are only weakly correlated. Furthermore, \protect\input{figs/cep_gls_tab} \unskip of periods are estimated incorrectly (red $\times$). (c) Using the same data as in \ref{fig:phase_ests_ls_poor}, PGLS produces more correlated phase estimates. Incorrect period estimates drop to \protect\input{figs/cep_pgls_tab}\unskip.}
\end{figure}

The remainder of this paper is structured as follows: In Section \ref{sec:back_related} we discuss additional astronomy background and existing methodology for estimating periods of variable stars. In Section \ref{sec:est} we introduce a penalized likelihood method for estimating periods that uses known correlations in phase and amplitude across bands. In Section \ref{sec:algorithm} we develop a fast algorithm to maximize the likelihood by combining block coordinate descent with the majorization-minimization (MM) principle. In Section \ref{sec:gamma_selection} we discuss selection of tuning parameters. We apply our algorithm to simulated and real data in Section \ref{sec:data}. We finish with conclusions in Section \ref{sec:conclusions}.

\section{Astronomy Background and Existing Period Estimation Methods}
\label{sec:back_related}

\subsection{Photometry}

Telescopes take images of the sky using a particular filter, also known as passband or band. From these images, stars are identified and their brightnesses estimated. Let $m_{bi}$ be the brightness of a star observed in band $b$ for an image taken at time $t_{bi}$. The standard unit of brightness in astronomy is magnitude. Magnitude is a negative log transform of luminosity, thus brighter objects have lower magnitudes. An uncertainty on $m_{bi}$, denoted $\sigma_{bi}$, is also typically recorded. Often the observed magnitude $m_{bi}$ is modeled as being some true, unobserved magnitude plus Gaussian error with standard deviation $\sigma_{bi}$.  Using this notation, the light curve for a particular star is $\{\{(t_{bi},m_{bi},\sigma_{bi})\}_{i=1}^{n_b}\}_{b=1}^B$ where $B$ is the number of bands and $n_b$ is the number of images of the star taken in band $b$. Modern astronomy surveys collect millions of light curves.

\subsection{Identifying Variable Stars}
 
Before period estimation is performed, astronomers typically separate the non--variable and variable stars in a survey using variability detection methods. Simple methods involve computing the difference between the magnitudes and the mean magnitude for each band. Large absolute differences suggest the star is a variable. This approach does not use any time information. The Welch--Stetson statistics developed in \cite{welch1993robust} and \cite{stetson1996automatic} use time correlations in residuals. These statistics are designed for multiband data where images are taken nearly simultaneously in different bands. 


\subsection{Period Estimation}
\label{sec:period_est}
From the set of variable stars, the periodic variables are identified and characterized. Astronomers and statisticians have developed many methods for estimating periods. Nearly all of the existing methodology has been designed for single band data observed many times. We refer interested readers to \cite{reimann1994frequency} Chapter 3.1 and \cite{graham2013comparison} for reviews of existing methodology.

Here we describe a pair of approaches, known in astronomy as Lomb--Scargle (LS) and generalized Lomb--Scargle (GLS). The multiband period estimation method developed in this article is an extension of GLS. LS and GLS both model the magnitudes as a sinusoid plus measurement error. The period estimate is the period that maximizes the likelihood. With LS the magnitudes are scaled to mean 0, and a sinusoid without an intercept is fit to the data  \citep{lomb1976least,scargle1982studies}. \cite{zechmeister2009generalised} proposed a modification of LS, GLS, in which the magnitudes are not normalized to mean $0$ and an intercept term is used in the sine fit. Specifically let
\begin{eqnarray*}
m_i & = & \beta_0 + a \sin(\omega t_i + \rho) + \epsilon_i,
\end{eqnarray*}
where $\beta_0$, $a$, $\rho$, and $\omega$ are the magnitude offset, amplitude, phase, and frequency of the star. The measurement error $\epsilon_i \sim N(0,\sigma_i^2)$ is assumed to be independent across $i$. Although the model is non-linear in the frequency $\omega$, one can nonetheless compute maximum likelihood estimates by selecting a grid $\Omega$ of possible frequencies $\omega$ and solving a linear regression at each frequency in the grid. We describe this procedure in Section \ref{sec:grid_unpenalized}.

The sinusoidal model is an approximation, as most light curves exhibit at least some degree of non--sinusoidal variation. However, LS and GLS have proven to be effective period estimation algorithms in many settings. \cite{reimann1994frequency} (Section 3.3) found that for periodic light curves with unimodal behavior over phase, LS estimated frequency as well as more sophisticated approaches, including sinusoidal models with harmonic terms, periodic cubic splines, and SuperSmoother.\footnote{See \cite{friedman1984variable} for a description of SuperSmoother.} In a study of periodic variables with a variety of light curve shapes, \cite{dubath2011random} found that GLS outperformed competing period estimation methods. Part of the success of LS and GLS is due to being computationally fast relative to many alternatives. LS and GLS have been used in many recent studies of periodic variable stars \citep{richards2011machine,debosscher2009automated,watkins2009substructure,suveges2012search}.

\subsection{Multiband Period Estimation Methods} 

While many period estimation procedures exist for single band data, few methods have been developed for multiband light curves. With multiband data, practitioners typically use single band period estimation procedures individually on each band and then combine or choose between the single band estimates based on some criteria. For example \cite{watkins2009substructure} used the LS sinusoid model run separately on g and r band light curves to select candidate periods that were then analyzed using a string--length period algorithm to determine a single best estimate.

\cite{suveges2012search} applied GLS to multiband data by first combining the bands together using a robust version of principal components analysis. Their method assumes that brightness measurements in the different bands are taken at the same time, limiting its applicability. The two methods we introduce below do not require simultaneous measurements.




\section{A Penalized Maximum Likelihood Multiband Period Estimator}
\label{sec:est}

\subsection{Multiband Lomb Scargle}

We begin by proposing a simple generalization of the single--band GLS algorithm of \cite{zechmeister2009generalised} to multiple bands. We call this first method MGLS for multiband GLS. To our knowledge MGLS is the first work to formally define GLS for multiple bands.

Throughout, scalars are denoted by lowercase letters ($u$), vectors by boldface lowercase letters ($\V{u}$), and matrices by boldface capital letters ($\M{U}$). We denote the standard dot product between vectors $\V{a}$ and $\V{b}$ by $\langle \V{a}, \V{b} \rangle = \V{a}\Tra\V{b}$. We denote the vector of all ones with $\V{1}$.

Let $D = \{\{(t_{bi},m_{bi},\sigma_{bi})\}_{i=1}^{n_b}\}_{b=1}^B$ denote the data, namely the triples of times, magnitudes, and magnitude error measurements in the $B$ bands. Let $\omega$ be the frequency that is shared across all bands. Brightness at time $t$ in band $b$ is modeled using the sine curve
\begin{eqnarray*}
\mu_b(t) & = & \VE{a}{b} \sin(\omega t + \rho_b) + \beta_{0b}
\end{eqnarray*}
where $\VE{a}{b}$, $\rho_b$ and $\beta_{0b}$ are the amplitude, phase, and magnitude offset for band $b$. 
For notational efficiency we refer to vectors of amplitudes, phases, and magnitude offsets as
\begin{eqnarray*}
\V{A} & = & (a_1, \ldots, a_B)\Tra\\
\V{\rho} & = & (\rho_1,\ldots,\rho_B)\Tra\\
\V{\beta_0} & = & (\beta_{01},\ldots,\beta_{0B})\Tra.
\end{eqnarray*}
We bundle these parameters into the vector $\V{\theta} = (\omega,\V{A}\Tra,\V{\rho}\Tra,\V{\beta_0}\Tra) \in \Real^{3B + 1}$.
The observed magnitude $m_{bi}$ is a noisy measurement of $\mu_b$ at time $t_{bi}$, namely
\begin{eqnarray*}
m_{bi} & = & \mu_b(t_{bi}) + \epsilon_{bi},
\end{eqnarray*}
where $\epsilon_{bi} \sim N(0,\sigma_{bi}^2)$. Assuming mutual independence of all $\epsilon_{bi}$, the likelihood function becomes
\begin{eqnarray}
\label{eq:like}
p(D \mid \V{\theta}) & = & \prod_{b=1}^B \prod_{i=1}^{n_b} \frac{1}{\sqrt{2\pi}\sigma_{bi}} e^{-[m_{bi} - \mu_b(t_{bi})]^2 / (2\sigma_{bi}^2)}.
\end{eqnarray}
Note that to ensure identifiability of this model we require that the elements $\VE{a}{b}$ be nonnegative and the elements $\VE{\rho}{b}$ reside in the interval $[0,\pi)$. To keep our notation readable, we do not explicitly write out these constraints in the rest of the paper. 

The negative log likelihood (NLL) of $p(D \mid \V{\theta})$ is
\begin{eqnarray}
\ell(\omega,\V{\beta_0},\V{a},\V{\rho} \mid D) & = & \frac{1}{2} \sum_{b=1}^B \sum_{i=1}^{n_b} \left(\frac{m_{bi} - A_b\sin(t_{bi}\omega + \rho_b) - \beta_{b0}}{\sigma_{bi}}\right)^2
\end{eqnarray}
The solution to the NLL at frequency $\omega$ is
\begin{eqnarray*}
\ell(\omega) & = & \min_{\V{a},\V{\rho},\V{\beta_0}} \ell(\omega,\V{\beta_0},\V{a},\V{\rho})
\end{eqnarray*}
and the maximum likelihood estimate for $\omega$ is
\begin{eqnarray}
\label{eq:simple_multi_gls}
\widehat{\omega} & = & \argmin{\omega \in \Omega} \ell(\omega).
\end{eqnarray}
Equation \ref{eq:simple_multi_gls} is a straightforward generalization of the GLS algorithm to multiple bands. We discuss the computation of $\widehat{\omega}$ in Section \ref{sec:grid_unpenalized}.

\subsection{Penalty Terms}
\label{sec:penalty}
We now propose a generalization of GLS which penalizes unlikely values of $\V{a}$ and $\V{\rho}$. We call this method PGLS for penalized generalized Lomb-Scargle. The penalized negative log likelihood (PNLL) is
\begin{eqnarray}
\label{eq:penlike}
  f(\omega,\V{\beta_0},\V{a},\V{\rho} \mid D;\gamma_1,\gamma_2) & = & \ell(\omega,\V{\beta_0},\V{a},\V{\rho} \mid D)  + \gamma_1 J_1(\V{a}) + \gamma_2 J_2(\V{\rho}).
\end{eqnarray}
The minimal PNLL at frequency $\omega$ is
\begin{eqnarray}
\label{eq:pnll}
f(\omega \mid D;\gamma_1,\gamma_2) & = &   \min_{\V{\beta_0},\V{a},\V{\rho}} f(\omega,\V{\beta_0},\V{a},\V{\rho} \mid D;\gamma_1,\gamma_2),
\end{eqnarray}
and the PGLS frequency estimate is
\begin{eqnarray}
\label{eq:penalty_multi_gls}
\widehat{\omega}(D;\gamma_1,\gamma_2) & = & \argmin{\omega \in \Omega} f(\omega \mid D;\gamma_1,\gamma_2).
\end{eqnarray}
Note that we often suppress dependence of $f$ and $\widehat{\omega}$ on $D$, $\gamma_1$ and $\gamma_2$ when these quantities are fixed. The penalties $J_1$ and $J_2$ (see \eqref{eq:amp_penalty} and \eqref{eq:rho_penalty}) are chosen to be large for values of $\V{a}$ and $\V{\rho}$ that are unlikely.

We motivate the form of $J_1$ and $J_2$ using periodic variable star data from the OGLE survey \citep{udalski2008optical}. We fit MGLS to 100 well--observed stars in the I and V bands for four classes of periodic variables (Type I Cepheid, Type II Cepheid, RR Lyrae AB, RR Lyrae C). Since these stars have been well--observed in the I and V bands (at least 50 measurements / band), the parameter estimates should be quite accurate. 
\begin{figure}[t]
  \begin{center}
    \begin{includegraphics}[height=4in,width=4in]{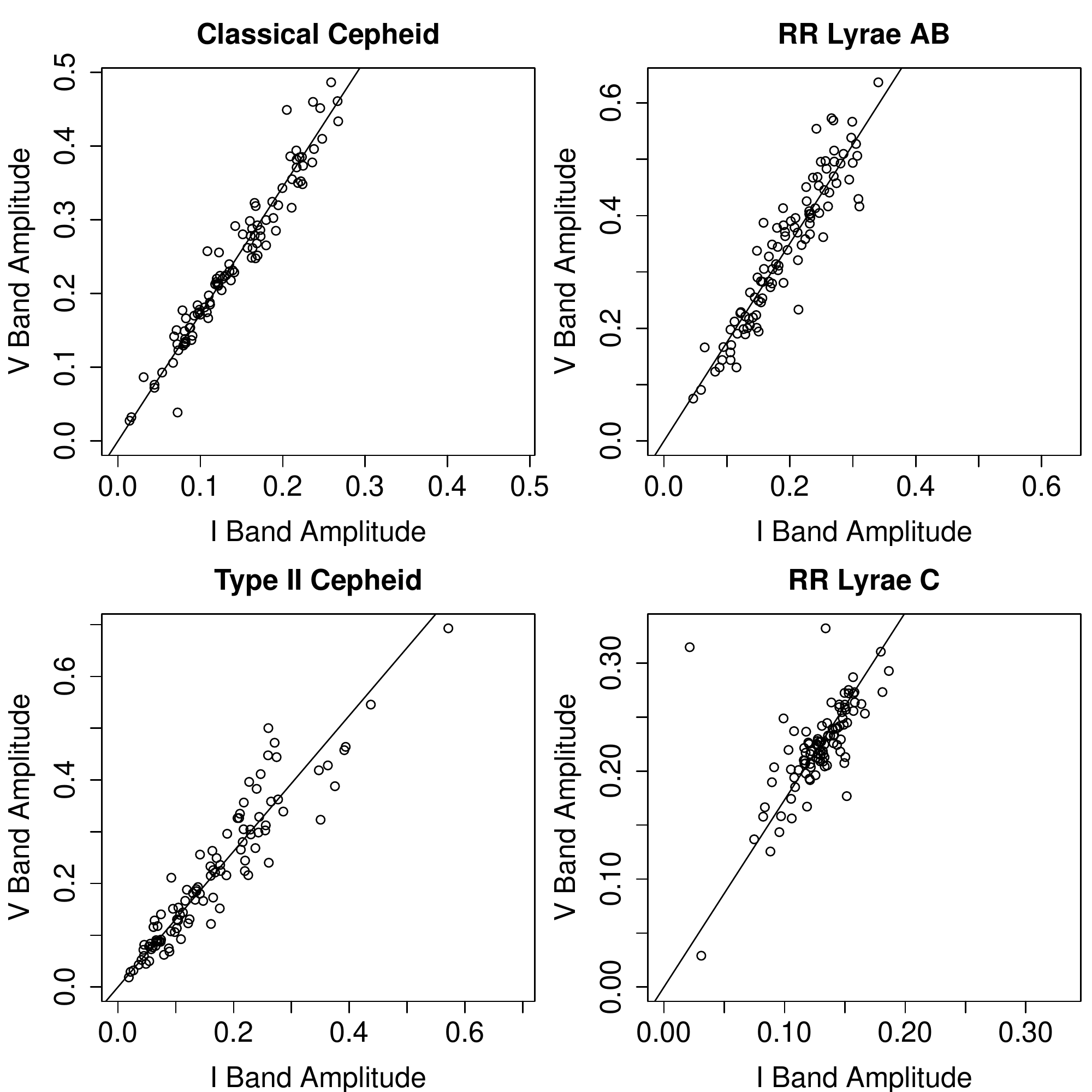}
      \caption{Amplitude correlations in the I and V bands for classes of periodic variable stars.\label{fig:scatter_A}}
    \end{includegraphics}
  \end{center}
\end{figure}
In Figure \ref{fig:scatter_A} we plot the maximum likelihood estimates of amplitude in the I and V bands for the four classes of periodic variables. In each plot, the line is chosen to pass through $(0,0)$ and the mean of the amplitudes in each band. Amplitudes cluster around these lines. The slope of these lines is greater for the RR Lyrae classes than for either Cepheid class. Let $\Vtilde{a}$ be the vector of mean amplitudes in each band. We propose penalizing the amplitude vector with
\begin{equation}
\label{eq:amp_penalty}
J_1(\V{a}) \amp = \amp \frac{1}{2} \V{a}\Tra \left (\M{I} - \Vtilde{a}{\Vtilde{a}}\Tra \right ) \V{a} \amp = \amp \frac{1}{2}\lVert\V{a} - \Vtilde{a}\Tra\V{a}\Vtilde{a}\rVert_2^2.\\
\end{equation}
where $\Vtilde{a}\Tra \Vtilde{a}=1$. In words, $J_1(\V{a})$ is half the squared norm of the amplitude component orthogonal to $\tilde{\V{a}}$. 
\begin{figure}[t]
  \begin{center}
    \begin{includegraphics}[height=4in,width=4in]{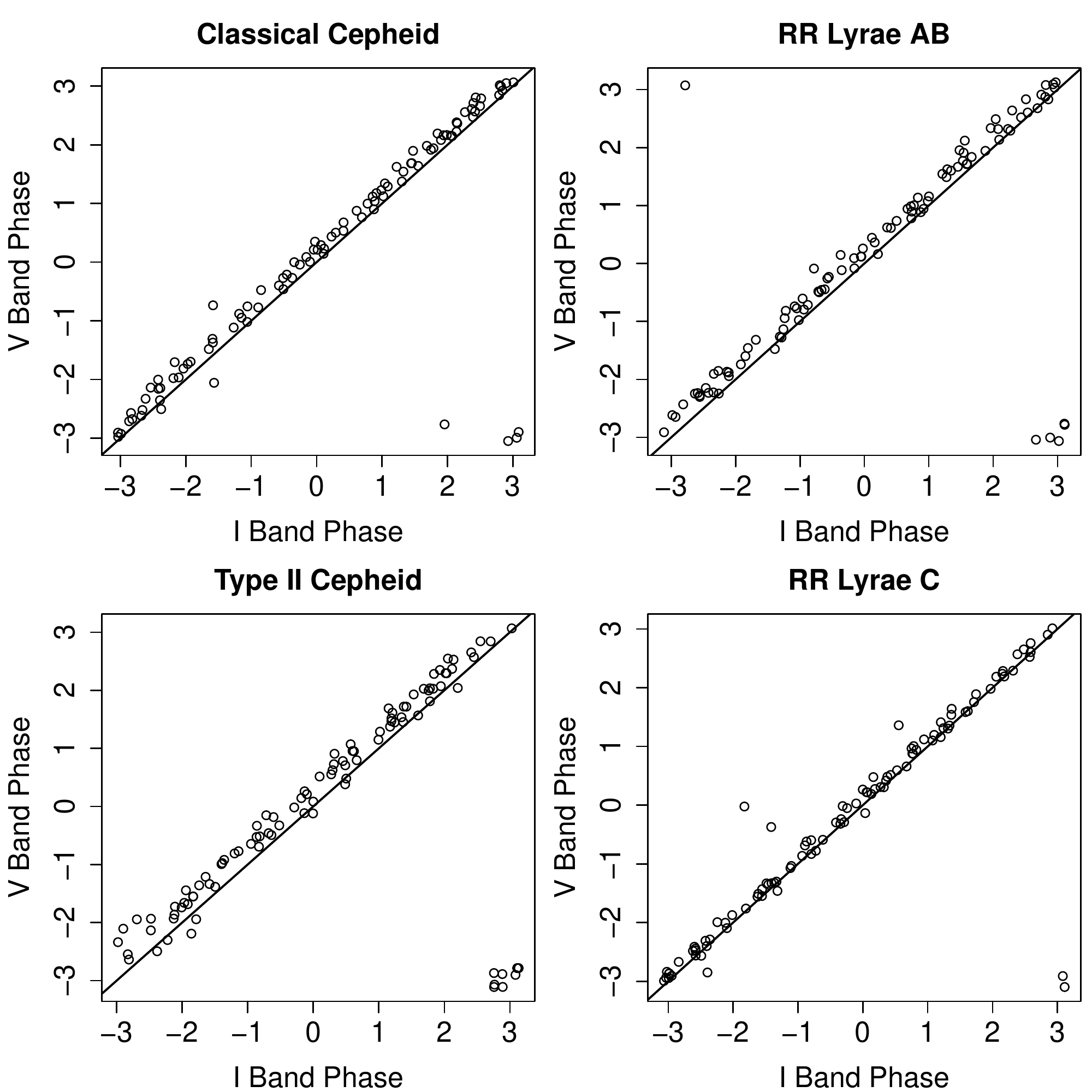}
      \caption{Phase correlations in the I and V bands for four classes of periodic variable stars.\label{fig:scatter_rho}}
    \end{includegraphics}
  \end{center}
\end{figure}

In Figure \ref{fig:scatter_rho} we plot the maximum likelihood estimates for I band and V band phases by class along with a line with slope 1 and y--intercept of 0. For each class, phase estimates usually are close to the vector $\V{1}$, implying that the light curve reaches maximum brightness in I and V bands at approximately the same point in phase space. Since a phase of $-\pi$ and a phase of $\pi$ are the same, the light curves represented by points in the upper left (near $(-\pi,\pi)$) and lower right (near $(\pi,-\pi)$) in each plot are not actually outliers.\footnote{Note that for the Type I and Type II Cepheid classes, the V band phase is slightly larger on average than the I band phase, implying the peak brightness occurs earlier in I band slightly earlier than V band. This has been documented in the astronomy literature; see Table 5 of \cite{freedman1988new}.} Based on these plots, $\V{\rho}$ is penalized with
\begin{equation}
\label{eq:rho_penalty}
J_2(\V{\rho}) \amp = \amp \frac{1}{2} \V{\rho}\Tra \left (\M{I} - \frac{1}{B}\V{1}\V{1}\Tra \right )\V{\rho} \amp = \amp \frac{1}{2} \lVert \V{\rho} - \frac{\V{1}\Tra \V{\rho}}{\V{1} \Tra \V{1}}\V{1}\rVert_2^2.
\end{equation}
In words, $J_2(\V{\rho})$ is half the squared norm of the component of $\V{\rho}$ that is orthogonal to $\V{1}$. The parameter $\gamma_2$ controls how strongly phase estimates are forced towards a multiple of the vector $\V{1}$. This parameter may be set based on the type of periodic variable. For example, if a practitioner is attempting to find periods for a class of variable stars where the phases in different bands are expected to be nearly identical, then $\gamma_2$ may be made very large. In contrast, if only a weak relationship is expected, $\gamma_2$ may be made much smaller. In Figure \ref{fig:scatter_rho}, there is roughly the same scatter for all classes, so a single $\gamma_2$ may be appropriate. We discuss selection of $\gamma_2$ further in Section \ref{sec:gamma_selection}. Note that with $\Vtilde{a} = (B^{-1/2},\ldots,B^{-1/2})$, $J_1$ has the same form as $J_2$.

In the remainder of this section, we discuss the existing literature on penalities similar to $J_1$ and $J_2$. To our knowledge there is no existing work on using these penalties for period estimation. An algorithm for computing $\widehat{\omega}$ in \ref{eq:penalty_multi_gls} is discussed in Section \ref{sec:algorithm}. Methodology for choosing $\gamma_1$ and $\gamma_2$ is discussed in Section \ref{sec:gamma_selection}.


\subsection{Discussion of the Penalties}
\label{sec:discuss_penalty}


Both penalties terms can be written generically as $J(\V{\nu}) = \V{\nu}\Tra \M{\Lambda}\ \V{\nu}$ where $\M{\Lambda}$ is positive semidefinite. When $\M{\Lambda}$ is the identity matrix, we recover the classic ridge penalty \citep{HoeKen1970}. When $\M{\Lambda}$ encodes a discrete differential operator, we recover the quadratic penalty broadly used in smoothing splines, reproducing kernel Hilbert spaces, and functional data analysis \citep{Ram2004}. In fact such quadratic penalties have garnered attention in functional principal components analysis \citep{HuaSheBuj2009, TiaHuaShe2012,All2013, AllGroTay2014} when there is a natural ``adjacency" among parameters and as well as a prior belief that variation among adjacent parameters should be smooth.

Here we propose using the discrete first order differential operator that corresponds to taking $\M{\Lambda}$ to be the graph Laplacian \citep{Chu1997} of a completely connected network of $B$ nodes where each node corresponds to a variable (amplitude or phase) of a given band. The graph Laplacian has been previously employed to enforce smoothness of regression parameters corresponding to 
neighbors in a gene network \citep{LiLi2008}. It has also been applied in enforcing smoothness in spatial parameters for estimating allele frequencies across populations in neighboring geographic regions
\citep{GelVou2003,RanNovLan2014}.

\section{Minimizing the Penalized Likelihood}
\label{sec:algorithm}

In this section we discuss computation of the PGLS frequency estimate $\widehat{\omega}$ in \eqref{eq:penalty_multi_gls}. The algorithm consists of:
\begin{enumerate}
\item Choosing a grid of possible frequencies $\Omega$.
\item Minimizing the negative log likelihood (NLL) for every frequency in the grid.
\item Minimizing the penalized negative log likelihood (PNLL) using a block coordinate descent algorithm for a sufficient subset of the frequencies in the grid.
\end{enumerate}

In Section \ref{sec:grid_unpenalized} we discuss steps 1 and 2. The frequency that minimizes the NLL is the MGLS frequency. In Section \ref{sec:necessary} we show that in order to obtain the PGLS frequency estimate, it is necessary only to minimize the PNLL on a subset of $\Omega$. In Section \ref{sec:bcd} we discuss step 3.

\subsection{Choice of Frequency Grid and Minimizing the Negative Log Likelihood}
\label{sec:grid_unpenalized}

The grid of frequencies $\Omega$ is typically chosen on a linear scale with endpoints representing the physical minimium and maximum frequencies possible for the periodic variables of interest. For example, RR Lyrae type stars are known to have periods ranging from 0.1 days to 1 day (Chapter 6.8 of \cite{percy2007understanding}). Therefore the grid endpoints for $\Omega$ will occur at $(2\pi)/1 \approx 6.28$ and $(2\pi)/0.1 = 62.8$. In contrast, Cepheids generally have periods ranging from 1 day to 100 days, and thus a grid of frequencies from $0.06$ to $6.28$ is appropriate (Chapter 6.9 of \cite{percy2007understanding}). In periodic variable classification studies where a wide set of periodic variable classes are of interest, periods may range from a few hours to hundreds of days \citep{richards2011machine}.

The grid spacing generally depends on the length of time between the first and last observation. As a star is observed for a longer and longer time, small errors in frequency result in larger and larger shifts in phase space for the folded light curve. \cite{richards2011machine} used a grid with frequencies spaced $0.1/(\text{max} - \text{min})$ where $\text{max}$ and $\text{min}$ are the maximum time and minimum times of all star observations. We use this spacing in all examples here. For the SDSS-II RR Lyrae light curves studied in Section \ref{sec:sdss}, there are around 20,000 frequencies in $\Omega$.


The MGLS frequency estimate
\begin{eqnarray*}
\widehat{\omega} & = & \argmin{\omega \in \Omega} \! \! \min_{\V{a},\V{\rho},\V{\beta_0}} \ell(\omega,\V{\beta_0},\V{a},\V{\rho})
\end{eqnarray*}
is determined by minimizing $\ell$ with respect to $\V{\beta_0},\V{a},\V{\rho}$ at every frequency in the grid. The function $\ell$ can be minimized with respect to $\V{\beta_0},\V{a},\V{\rho}$ by performing $B$ linear regressions. In particular, with the first equality by definition and the third equality by the angle addition formula
\begin{align*}
&\min_{\V{a},\V{\rho},\V{\beta_0}} \ell(\omega,\V{\beta_0},\V{a},\V{\rho})\\
&= \amp \min_{\V{a},\V{\rho},\V{\beta_0}} \frac{1}{2} \sum_{b=1}^B \sum_{i=1}^{n_b} \left(\frac{m_{bi} - A_b\sin(t_{bi}\omega + \rho_b) - \beta_{b0}}{\sigma_{bi}}\right)^2\\
&= \amp \frac{1}{2} \sum_{b=1}^B \min_{a_b,\rho_b,\beta_{0b}} \sum_{i=1}^{n_b} \left(\frac{m_{bi} - A_b\sin(t_{bi}\omega + \rho_b) - \beta_{b0}}{\sigma_{bi}}\right)^2\\
&= \amp \frac{1}{2} \sum_{b=1}^B \min_{a_b,\rho_b,\beta_{0b}} \sum_{i=1}^{n_b} \left(\frac{m_{bi} - A_b\cos(\rho_b)\sin(t_{bi}\omega) - A_b\sin(\rho_b)\cos(t_{bi}\omega) - \beta_{b0}}{\sigma_{bi}}\right)^2.
\end{align*}
The summand is the residual sum of squares when regressing $(m_{b1}, \ldots, m_{bn_b})$ on to the predictors $(\sin(t_{b1}\omega), \ldots, \sin(t_{bn_b}\omega))$ and $(\cos(t_{b1}\omega), \ldots, \cos(t_{bn_b}\omega))$ with known observation weights $\sigma_{b1}^{-2}, \ldots, \sigma_{bn_b}^{-2}$. 

\subsection{Minimizing the PNLL for a Necessary Subset of $\Omega$}
\label{sec:necessary}

Compared to minimizing the NLL at a fixed $\omega$, minimizing the PNLL is computationally expensive (see Section \ref{sec:bcd}). Thus we do not want to compute the PNLL at every element in a potentially large grid $\Omega$. Here we show that it is possible to only compute the PNLL on a subset of $\Omega$ and still be guaranteed of finding the frequency that minimizes the PNLL. The basic idea is to use the NLL $\ell$ as a pointwise lower bound of the PNLL $f$ in order to quickly check whether a frequency is a potential minimizer of $f$. More specifically, since $\ell(\cdot) \leq f(\cdot)$, if for two frequencies $\omega$ and $\omega'$ the inequality $\ell(\omega) > f(\omega')$ holds, then $\omega$ cannot be the frequency that minimizes $f$ since $f(\omega) \geq \ell(\omega) > f(\omega')$. The details for an iterative procedure based on this idea are given in Algorithm \ref{alg:SUBSET}.



\begin{algorithm}[t]
  \caption{Compute the penalized neg. log-likelihood (PNLL) on a subset of $\Omega$}
 \label{alg:SUBSET}
\begin{algorithmic}[1]
  \State compute $\ell(\omega) = \min_{\V{a},\V{\rho},\V{\beta_0}} \ell(\omega,\V{\beta_0},\V{a},\V{\rho})$ at every $\omega \in \Omega$ \Comment{NLL Algorithm} \label{line:i}
\State let $\omega_1,\ldots,\omega_G$ be ordering such that $\ell(\omega_k) \leq \ell(\omega_{k+1})$
\State $i \gets 0$
\State $\Omega_u(0) \gets \varnothing$
\Repeat
\State $i \gets i + 1$
\State $f(\omega_i) \gets \min_{\V{a},\V{\rho},\V{\beta_0}} f(\omega_i,\V{\beta_0},\V{a},\V{\rho})$ \Comment{PNLL BCD Algorithm} \label{line:ii}
\State $\Omega_u(i) = \{\omega \in \Omega: l(\omega) > f(\omega_i)\} \cup \Omega_u(i-1)$ \label{line:prf}
\Until{$\omega_{i+1} \in \Omega_u(i)$ or $i=G$ \label{line:iii}}
\State $R \gets i$
\State $\widehat{\omega} = \argmin{j \in \{1,\ldots,R\}} f(\omega_j)$  \label{line:what}
\end{algorithmic}
\end{algorithm}
The following proposition shows that $\widehat{\omega}$ determined by Line \ref{line:what} of Algorithm \ref{alg:SUBSET} is the global minimizer of $f$ across $\Omega$.
\begin{proposition} Under the notation of Algorithm \ref{alg:SUBSET}
\begin{equation*}
\argmin{j \in \{1,\ldots,R\}} f(\omega_j)  \amp =  \amp \argmin{\omega \in \Omega} f(\omega).
\end{equation*}
\end{proposition}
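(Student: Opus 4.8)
The plan is to combine two facts already in hand: the pointwise bound $\ell(\omega) \le f(\omega)$ for every $\omega \in \Omega$, which holds because the penalties $\gamma_1 J_1(\V{a})$ and $\gamma_2 J_2(\V{\rho})$ are nonnegative so that the minimized PNLL dominates the minimized NLL; and the fact that the frequencies have been sorted so that $\ell(\omega_1) \le \ell(\omega_2) \le \cdots \le \ell(\omega_G)$. First I would unroll the recursion for $\Omega_u$. Since $\Omega_u(0) = \varnothing$ and $\Omega_u(i) = \{\omega : \ell(\omega) > f(\omega_i)\} \cup \Omega_u(i-1)$, a one-line induction gives the closed form
\begin{equation*}
\Omega_u(i) = \{\omega \in \Omega : \ell(\omega) > m_i\}, \qquad m_i := \min_{1 \le j \le i} f(\omega_j),
\end{equation*}
so that $\Omega_u(i)$ is exactly the set of frequencies whose NLL lower bound already exceeds the best PNLL value computed so far.

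Next I would dispose of the two branches of the termination test (Line \ref{line:iii}). If the loop exits because $i = G$, then $R = G$ and $f$ has been evaluated at every frequency in $\Omega$, so the two argmins trivially coincide. The substantive branch is exit via $\omega_{R+1} \in \Omega_u(R)$; by the closed form above this is precisely the statement $\ell(\omega_{R+1}) > m_R$.

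The key step is to leverage this single exclusion into an exclusion of the entire unevaluated tail, using monotonicity of the sorted NLL values. For any index $k > R$ we have $k \ge R+1$, hence $\ell(\omega_k) \ge \ell(\omega_{R+1})$, and combining with $f \ge \ell$ yields
\begin{equation*}
f(\omega_k) \ge \ell(\omega_k) \ge \ell(\omega_{R+1}) > m_R = \min_{1 \le j \le R} f(\omega_j)
\end{equation*}
for every $k > R$. Thus no frequency of index greater than $R$ can attain the global minimum of $f$ over $\Omega$; the minimum, and every minimizer, must lie in $\{\omega_1, \ldots, \omega_R\}$. Reading $\operatorname{argmin}$ as the set of minimizers, this shows that the minimizing set over $\{1,\ldots,R\}$ coincides with the minimizing set over all of $\Omega$, which is the claim of Line \ref{line:what}.

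I expect the only genuinely delicate point to be this propagation step: the loop halts after ruling out just the \emph{single} frequency $\omega_{R+1}$, and one must argue that sorting the NLL values makes this enough to exclude all of $\omega_{R+1}, \ldots, \omega_G$ at once. The remaining ingredients, namely nonnegativity of the penalties giving $\ell \le f$ and the telescoping of the definition of $\Omega_u$, are routine bookkeeping. I would also take care to phrase the conclusion in terms of sets of minimizers, so that the asserted equality is between the two $\operatorname{argmin}$ \emph{sets} rather than merely between the two minimal values.
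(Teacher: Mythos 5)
Your proof is correct and follows essentially the same route as the paper's: both rest on the pointwise bound $\ell(\cdot) \le f(\cdot)$ together with the termination condition to conclude that every unevaluated frequency $\omega_r$ with $r > R$ satisfies $f(\omega_r) \ge \ell(\omega_r) > f(\omega_h)$ for some evaluated $\omega_h$, hence cannot be a minimizer. Your write-up simply makes explicit several steps the paper leaves implicit --- the closed form $\Omega_u(i) = \{\omega : \ell(\omega) > \min_{j \le i} f(\omega_j)\}$, the trivial $i = G$ exit branch, and the use of the sorted ordering $\ell(\omega_k) \ge \ell(\omega_{R+1})$ to propagate the single exclusion to the whole tail --- which is careful bookkeeping rather than a different argument.
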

\begin{proof}
By Line \ref{line:prf} of Algorithm \ref{alg:SUBSET}, for any $r > R$  there exists $h \leq R$ such that $\ell(\omega_r) > f(\omega_h)$. Since $f(\cdot) \geq \ell(\cdot)$ for all $\omega$ we have
\begin{equation*}
f(\omega_r) \amp \geq \amp \ell(\omega_r) \amp > \amp f(\omega_h).
\end{equation*}
Thus $\omega_r$ is not the minimizer of $f$.
\end{proof}

Algorithm \ref{alg:SUBSET} can greatly reduce the number of frequencies for which the PNLL must be minimized. For instance, with the SDSS-II RR Lyrae of Section \ref{sec:sdss} we use an $\Omega$ grid with around 20,000 frequencies for each star. However, by using Algorithm \ref{alg:SUBSET} typically only a few hundred of these frequencies are possible minimizers of the PNLL. Additionally, by running Algorithm \ref{alg:SUBSET} one obtains the $\V{a}$, $\V{\rho}$, and $\V{\beta_0}$ minimizers of the NLL at each $\omega$, which serve as good initialization values in the block coordinate descent algorithm for solving $f(\omega)$, which we turn to next.

\subsection{A Block Coordinate Descent Algorithm}
\label{sec:bcd}
We now describe an algorithm for minimizing the PNLL, \eqref{eq:penlike}, at a particular frequency $\omega$. Since the frequency $\omega$ is fixed in this section, we drop dependence on $\omega$ in the NLL ($\ell(\omega,\V{\beta_0},\V{a},\V{\rho})$ becomes $\ell(\V{\beta_0},\V{a},\V{\rho})$) and PNLL ($f(\omega,\V{\beta_0},\V{a},\V{\rho})$ becomes $f(\V{\beta_0},\V{a},\V{\rho})$). Let $\VE{w}{bi} = \VE{\sigma}{bi}^{-2}$, $s_{bi}(\rho) = \sin(\omega \VE{t}{bi} + \rho)$, and $c_{bi}(\rho) = \cos(\omega \VE{t}{bi} + \rho)$. Thus, we will view the PNLL as the following function of $\V{\beta_0}, \V{a},$ and $\V{\rho}$
\begin{eqnarray}
\label{eq:multiband_problem}
f(\V{\beta_0},\V{a},\V{\rho}) & = & \ell(\V{\beta_0},\V{a},\V{\rho})  + \gamma_1 J(\V{a}) + \gamma_2 J(\V{\rho}),
\label{eq:pen_like}
\end{eqnarray}
where
\begin{eqnarray*}
\ell(\V{\beta_0},\V{a},\V{\rho}) & = & \frac{1}{2}\sum_{b=1}^B \sum_{i=1}^{\VE{n}{b}} \VE{w}{bi}\left [\VE{\beta}{b0} + \VE{a}{b}s_{bi}(\VE{\rho}{b}) - \VE{m}{bi} \right ]^2 \\
J_1(\V{a}) & = & \frac{1}{2} \V{a}\Tra \left (\M{I} - \Vtilde{a}{\Vtilde{a}}\Tra \right ) \V{a} \\
J_2(\V{\rho}) & = & \frac{1}{2} \V{\rho}\Tra \left (\M{I} - \frac{1}{B}\V{1}\V{1}\Tra \right )\V{\rho}.
\end{eqnarray*}
The criterion to be minimized in \eqref{eq:multiband_problem} is non-convex. Consequently, finding a global optimizer may be too much to ask. Nonetheless, we can obtain a local minimizer efficiently using an inexact block coordinate descent (BCD) algorithm. In practice, a local minimizer typically suffices. Our numerical and real data examples demonstrate that the local optimizer provides better solutions than the current state of the art.

We first describe the BCD algorithm at a high level before detailing the individual steps. We minimize the PNLL \eqref{eq:multiband_problem} in round robin fashion with respect to the three blocks of variables $\V{\beta_0}, \V{a},$ and $\V{\rho}$ holding two of them fixed while minimizing with respect to the third one. At the $k+1$th round of updates, we solve in sequence the three smaller optimization problems:

\begin{eqnarray*}
\text{Update 1:} \quad\quad \Vn{\beta}{k+1}_0 & = & \argmin{\V{\beta_0}} \ell(\V{\beta_0},\Vn{a}{k},\Vn{\rho}{k}) \\
\text{Update 2:} \quad\quad \Vn{a}{k+1} & = & \argmin{\V{a}} \ell(\Vn{\beta}{k+1}_0, \V{a}, \Vn{\rho}{k}) + \lambda_1 J_1(\V{a}) \\
\text{Update 3:} \quad\quad \Vn{\rho}{k+1} & = & \argmin{\V{\rho}} \ell(\Vn{\beta}{k+1}_0, \Vn{a}{k+1}, \V{\rho}) + \lambda_2 J_2(\V{\rho}).
\end{eqnarray*}
We clarify that we do not impose constraints on $\V{a}$ and $\V{\rho}$ in performing individual block updates, but rather impose the constraints at the termination of the BCD algorithm. We discuss later why we can make this simplification, when we derive the individual block updates.

Our motivation for using BCD is that the individual updates are simple and fast to compute. We next derive these updates. To streamline our derivations we define a few terms more compactly.
Let $\V{s}_b(\rho) = (s_{1b}(\rho), \ldots, s_{n_bb}(\rho))\Tra$. Define $\V{c}_b(\rho)$ analogously. Let $\V{m}_b = (\VE{m}{1b}, \ldots, \VE{m}{n_bb})\Tra$.


\subsubsection*{Update 1: the intercepts $\V{\beta_0}$}

Fix $\V{a}$ and $\V{\rho}$. Let $\V{\beta_0}^+ = \arg\min_{\V{\beta_0}} f(\V{\beta_0},\V{a},\V{\rho})$ denote the update for $\V{\beta_0}$. The update for $\V{\beta_0}$ separates in its $B$ elements and is given by the minimizers to $B$ univariate convex quadratic functions
\begin{eqnarray}
\label{eq:update1}
\VE{\beta}{b0}^+ & = & \argmin{\beta} [\beta\V{1} + \VE{a}{b}\V{s}_b(\VE{\rho}{b}) - \V{m}_b]
\Tra \M{W}_b [\beta\V{1} + \VE{a}{b}\V{s}_b(\VE{\rho}{b}) - \V{m}_b],
\end{eqnarray}
where $\M{W}_b \in \Real^{\VE{n}{b} \times \VE{n}{b}}$ is a diagonal matrix with $i$th diagonal element $\VE{w}{bi}$.
It is straightforward to show that the solution to the optimization problem \eqref{eq:update1} is given by
\begin{eqnarray}
\label{eq:update_beta0}
\VE{\beta}{b0}^+ & = & \frac{\langle \V{1}, \M{W}_b[\V{m}_b - \VE{a}{b}\V{s}_b(\VE{\rho}{b})] \rangle}{\langle \V{1}, \M{W}_b\V{1} \rangle}.
\end{eqnarray}

\subsubsection*{Update 2: the amplitudes $\V{A}$}
 
Fix $\V{\beta_0}$ and $\V{\rho}$.  Let $\V{A}^+ = \arg\min_{\V{A}} f(\V{\beta_0},\V{A},\V{\rho})$ denote the update for $\V{A}$. To obtain the update $\V{a}^+$, we minimize a convex quadratic function
\begin{eqnarray}
\label{eq:update2}
\V{a}^+ & = &  \argmin{\V{a}} \frac{1}{2}\sum_{b=1}^B \sum_{i=1}^{n_b} \VE{w}{bi} \left [\VE{a}{b}s_{bi}(\VE{\rho}{b}) - \VE{\mu}{bi} \right ]^2 +  \frac{\gamma_1}{2} 
\V{a}\Tra[\M{I} - \Vtilde{a}\Vtilde{a}\Tra]\V{a},
\end{eqnarray}
where $\VE{\mu}{bi} = \VE{m}{bi} - \VE{\beta}{b0}$. Let $\V{\mu}_b = \V{m}_b - \VE{\beta}{b0}\V{1}$.
Setting the derivative of the quadratic function in \eqref{eq:update2} with respect to $\VE{a}{b}$ equal to zero yields the $B$ stationary conditions
\begin{eqnarray*}
\V{s}_{b}(\VE{\rho}{b})\Tra\M{W}_b\V{s}_{b}(\VE{\rho}{b}) \VE{a}{b} - \langle \V{s}_{b}(\VE{\rho}{b}), \M{W}_b\V{\mu}_b \rangle + \gamma_1\left [ \VE{a}{b} - \langle \Vtilde{a}, \V{a} \rangle \tilde{a}_{b} \right] & = & 0.
\end{eqnarray*}
The stationarity conditions imply that the update $\V{A}^+$ is the solution to the following linear system of equations
\begin{eqnarray*}
\left [\M{E} - \gamma_1\Vtilde{a}\Vtilde{a}\Tra \right] \V{A}
& = & \V{\xi},
\end{eqnarray*}
where $\M{E}$ is a diagonal $B \times B$ matrix with $b$th diagonal element
$\ME{E}{bb} = \V{s}_b(\VE{\rho}{b})\Tra\M{W}_b\V{s}_b(\VE{\rho}{b}) + \gamma_1$ and
\begin{eqnarray*}
\V{\xi} & = &
\begin{pmatrix}
\langle \V{s}_1(\rho_1), \M{W}_1\V{\mu}_1 \rangle \\
\vdots \\
\langle \V{s}_B(\VE{\rho}{B}), \M{W}_B\V{\mu}_B \rangle \\
\end{pmatrix}.
\end{eqnarray*}
Applying the matrix inversion lemma enables us to solve the system with a single matrix-vector multiply
\begin{eqnarray}
\label{eq:update_a}
\V{A}^+ & = & \M{E}\Inv \left [\M{I} + \frac{1}{\frac{1}{\gamma_1} - \Vtilde{a}\Tra\M{E}\Inv\Vtilde{a}} \Vtilde{a}\Vtilde{a}\Tra\M{E}\Inv \right ]
\V{\xi}.
\end{eqnarray}

We now address why we are able to drop the nonnegativity constraints on $\V{a}$. Suppose that the mean brightness for a star at time $t$ in some band varies according to
\begin{eqnarray*}
\mu(t) & = & a \sin(\omega t + \rho),
\end{eqnarray*}
where the amplitude $a$ is negative. Then
\begin{eqnarray*}
\mu(t) & = & -\lvert a \rvert \sin(\omega t + \rho) \amp = \amp \lvert a \rvert \sin(\omega t + \rho + \pi)
\end{eqnarray*}
and so having a ``negative amplitude'' $a$ with phase $\rho$ is equivalent to having a positive amplitude $\lvert a \rvert$ with phase $\rho + \pi$. Consequently, if at the termination of PGLS an amplitude is negative, we can simply shift the estimate of the corresponding phase by $\pi$. 

\subsubsection*{Update 3: the phases $\V{\rho}$ via MM}

Fix $\V{\beta_0}$ and $\V{A}$. Unlike the previous two updates, which required minimizing a convex function, the update for $\V{\rho}$ requires minimizing a non-convex function.
Nonetheless, it can be effectively attacked using a majorization-minimization (MM) algorithm \citep{BecYanLan1997,LanHunYan2000} to get an approximate solution to the problem
\begin{eqnarray}
\label{eq:update_rho}
\Vn{\rho}{k+1}_0 & = & \argmin{\V{\rho}} \ell(\Vn{\beta}{k+1}_0, \Vn{a}{k+1}, \V{\rho}) + \lambda_2 J_2(\V{\rho}).
\end{eqnarray}

The basic principle behind an MM algorithm is to convert a hard optimization problem into a sequence of simpler ones. The MM principle requires majorizing the objective function $f(\V{y})$ by a surrogate function $g(\V{y} \mid \V{x})$ anchored at the current point $\V{x}$.  Majorization is a combination of a tangency condition $g(\V{x} \mid \V{x}) =  f(\V{x})$ and a domination condition $g(\V{y} \mid \V{x})  \geq f(\V{y})$ for all $\V{y} \in \Real^d$.  The associated MM algorithm is then defined by the iterates
\begin{eqnarray}
  \label{eq:generic-MM-iterate}
  \V{x}_{k+1} & := & \argmin{\V{y}} g(\V{y} \mid \V{x}_{k}).
\end{eqnarray}
Because 
\begin{equation}
  f(\V{x}_{k+1}) \amp \leq \amp g(\V{x}_{k+1} \mid \V{x}_{k}) \amp \leq \amp g(\V{x}_{k} \mid \V{x}_{k}) \amp = \amp f(\V{x}_{k}),
\end{equation}
the MM iterates generate a descent algorithm driving the objective function downhill.

To update $\V{\rho}$, we take advantage of the fact that for each $b$
\begin{eqnarray*}
f_{b}(\VE{\rho}{b}) & = & \frac{1}{2}\sum_{i=1}^{n_b} \VE{w}{bi}\left [\VE{a}{b}s_{bi}(\VE{\rho}{b}) - \VE{\mu}{bi} \right ]^2
\end{eqnarray*}
is Lipschitz differentiable. This fact can be leveraged to construct a simple convex quadratic majorization of $f$ as a function of $\V{\rho}$.
\begin{proposition}
\label{prop:majorization}
The following function majorizes $\sum_b f_{b}(\VE{\rho}{b})$ at the point $\Vtilde{\rho}$:
\begin{eqnarray*}
g(\V{\rho} \mid \Vtilde{\rho})
& = & \sum_{b=1}^B \left [
f_b(\tilde{\rho}_b) + f'(\tilde{\rho}_b) (\VE{\rho}{b} - \tilde{\rho}_b) + \frac{L_b}{2} (\VE{\rho}{b} - \tilde{\rho}_b)^2
\right ],
\end{eqnarray*}
where
\begin{eqnarray*}
L_b & = & \VE{a}{b}  \left [\VE{a}{b} \kappa_b + \sqrt{n_b}\lVert \M{W}_b\V{\mu}_{b} \rVert_2 \right ] \quad \text{and} \quad \kappa_b \amp = \amp \V{1}\Tra \M{W}_b \V{1}.
\end{eqnarray*}
\end{proposition}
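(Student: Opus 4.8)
The plan is to exploit two structural features of the objective: it is separable across bands, and each band term $f_b$ is a smooth univariate function of $\VE{\rho}{b}$ whose second derivative admits a uniform upper bound. Since the candidate surrogate $g(\V{\rho} \mid \Vtilde{\rho})$ is itself a sum over $b$ of univariate quadratics, it suffices to show that for each band the quadratic
\[
q_b(\rho \mid \tilde{\rho}_b) = f_b(\tilde{\rho}_b) + f_b'(\tilde{\rho}_b)(\rho - \tilde{\rho}_b) + \frac{L_b}{2}(\rho - \tilde{\rho}_b)^2
\]
majorizes $f_b$ at $\tilde{\rho}_b$; since a sum of band-wise majorizers is a majorizer of the sum, this yields the full claim. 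The tangency condition $q_b(\tilde{\rho}_b \mid \tilde{\rho}_b) = f_b(\tilde{\rho}_b)$ is immediate because the linear and quadratic terms vanish at $\rho = \tilde{\rho}_b$, so the real content is the domination condition $q_b(\rho \mid \tilde{\rho}_b) \geq f_b(\rho)$ for all $\rho$.

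First I would reduce domination to a uniform curvature bound via the standard descent lemma: if $f_b$ is twice continuously differentiable and $f_b''(\rho) \leq L_b$ for every $\rho$, then a second-order Taylor expansion with Lagrange remainder, $f_b(\rho) = f_b(\tilde{\rho}_b) + f_b'(\tilde{\rho}_b)(\rho - \tilde{\rho}_b) + \tfrac{1}{2} f_b''(\xi)(\rho - \tilde{\rho}_b)^2$ for some $\xi$ between $\rho$ and $\tilde{\rho}_b$, immediately gives $f_b(\rho) \leq q_b(\rho \mid \tilde{\rho}_b)$. Since $f_b$ is a finite sum of squared sinusoids it is certainly smooth, so the proof collapses to establishing $\sup_{\rho} f_b''(\rho) \leq L_b$.

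The remaining work is a direct computation followed by elementary bounds. Differentiating $f_b(\rho) = \frac{1}{2} \sum_{i=1}^{n_b} \VE{w}{bi}[\VE{a}{b}s_{bi}(\rho) - \VE{\mu}{bi}]^2$ twice and using $c_{bi}(\rho)^2 - s_{bi}(\rho)^2 = \cos(2(\omega \VE{t}{bi} + \rho))$ gives
\[
f_b''(\rho) = \VE{a}{b}^2 \sum_{i=1}^{n_b} \VE{w}{bi}\cos(2(\omega \VE{t}{bi} + \rho)) + \VE{a}{b}\langle \V{s}_b(\rho), \M{W}_b \V{\mu}_b \rangle .
\]
I would bound the two terms separately. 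For the first, $\cos(\cdot) \leq 1$ gives $\sum_i \VE{w}{bi}\cos(\cdot) \leq \sum_i \VE{w}{bi} = \kappa_b$, so the term is at most $\VE{a}{b}^2 \kappa_b$. For the second, Cauchy--Schwarz together with $\lVert \V{s}_b(\rho) \rVert_2 \leq \sqrt{n_b}$ (each sine entry is bounded by $1$) gives $\langle \V{s}_b(\rho), \M{W}_b \V{\mu}_b \rangle \leq \sqrt{n_b}\lVert \M{W}_b \V{\mu}_b \rVert_2$, so the term is at most $\VE{a}{b}\sqrt{n_b}\lVert \M{W}_b \V{\mu}_b \rVert_2$. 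Summing the two bounds reproduces $L_b = \VE{a}{b}[\VE{a}{b}\kappa_b + \sqrt{n_b}\lVert \M{W}_b \V{\mu}_b \rVert_2]$ exactly, as desired.

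The main obstacle I anticipate is not any single calculation but the sign bookkeeping in the cross term: the bound on $\VE{a}{b}\langle \V{s}_b(\rho), \M{W}_b \V{\mu}_b \rangle$ requires $\VE{a}{b} \geq 0$, which is precisely the identifiability constraint imposed on the amplitudes. Were a negative $\VE{a}{b}$ permitted, the cross term could carry the sign opposite to the one the stated $L_b$ accommodates, and one would instead need $\lvert \VE{a}{b} \rvert$ in place of $\VE{a}{b}$. I would therefore explicitly record that the majorization is performed under the convention $\VE{a}{b} \geq 0$ (equivalently, after the phase-shift reparameterization used to drop the nonnegativity constraint in Update 2).
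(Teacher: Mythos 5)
Your proof is correct and takes essentially the same route as the paper: a second-order Taylor expansion with Lagrange remainder reduced to the uniform curvature bound $f_b''(\rho) \leq L_b$, established bandwise via $\kappa_b$ and Cauchy--Schwarz with $\lVert \V{s}_b(\rho)\rVert_2 \leq \sqrt{n_b}$ (your double-angle identity $c_{bi}^2 - s_{bi}^2 = \cos(2(\omega t_{bi}+\rho)) \leq 1$ is an equivalent packaging of the paper's step of dropping the term $-\VE{a}{b}^2\, \V{s}_b(\rho)\Tra \M{W}_b \V{s}_b(\rho) \leq 0$ and bounding $\V{c}_b(\rho)\Tra\M{W}_b\V{c}_b(\rho) \leq \kappa_b$). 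Your explicit note that the cross-term bound requires the identifiability convention $\VE{a}{b} \geq 0$ is a correct observation that the paper leaves implicit.
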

The proof is given in the \App{mm}. The MM algorithm generates an improved estimate $\V{\rho}^+$ of the solution to \eqref{eq:update_rho} given a previous estimate
$\Vtilde{\rho}$ using the following update rule
\begin{eqnarray}
\label{eq:mm_update}
\V{\rho}^+ & = & \argmin{\V{\rho}} g(\V{\rho} \mid \Vtilde{\rho}) + \gamma_2 J_2(\V{\rho}).
\end{eqnarray}
There is always a unique minimizer $\V{\rho}^+$ to \eqref{eq:mm_update}, since $g(\V{\rho} \mid \Vtilde{\rho})$ is strongly convex in $\V{\rho}$.

We now show how $\V{\rho}^+$ can be expressed explicitly in terms of $\Vtilde{\rho}$. Note that
\begin{eqnarray}
\label{eq:maj_stationarity}
\frac{\partial}{\partial \VE{\rho}{b}} \left [
g(\V{\rho} \mid \Vtilde{\rho}) + \gamma_2 J_2(\V{\rho})
\right ] 
& = & f'_b(\tilde{\rho}_b) + L_b(\VE{\rho}{b} - \tilde{\rho}_b) + \gamma_2 \left [ \VE{\rho}{b} - \frac{1}{B}\sum_{b' =1}^B \rho_{b'} \right ].
\end{eqnarray}
The stationarity conditions imply that the update $\V{\rho}^+$ is the solution to the linear system of equations
\begin{eqnarray}
\label{eq:mm_system}
\left [\M{F} - \frac{\gamma_2}{B} \V{1}\V{1}\Tra \right] \V{\rho}
& = & \V{\zeta}(\Vtilde{\rho}),
\end{eqnarray}
where $\M{F}$ is a diagonal $B \times B$ matrix with $b$th diagonal element
$\ME{F}{bb} = L_b + \gamma_2$ and
\begin{eqnarray*}
\V{\zeta}(\Vtilde{\rho}) & = &
\begin{pmatrix}
L_1\tilde{\rho}_1 - f'_1(\tilde{\rho}_1) \\
\vdots \\
L_B\tilde{\rho}_B - f'_B(\tilde{\rho}_B) \\
\end{pmatrix}.
\end{eqnarray*}
Again we turn to the matrix inversion lemma to solve \eqref{eq:mm_system} with a single matrix-vector multiply
\begin{eqnarray}
\label{eq:update_rho_explicit}
\V{\rho}^+ & = & \M{F}\Inv \left [\M{I} + \frac{1}{\frac{B}{\gamma_2} - \V{1}\Tra\M{F}\Inv\V{1}} \V{1}\V{1}\Tra\M{F}\Inv \right ]
\V{\zeta}(\Vtilde{\rho})
\end{eqnarray}

Similar to the nonnegativity constraint on the amplitude, we ``enforce" the box constraint on the phase at the termination of the BCD algorithm. Since the criterion \eqref{eq:multiband_problem} is periodic in $\V{\rho}$, we choose the solution that is within the constraint set $[0,\pi]^B$.

Finally, we note that it is sufficient to perform a single MM update \eqref{eq:update_rho_explicit} and set $\Vtilde{\rho} = \Vn{\rho}{k}$ and $\Vn{\rho}{k+1} = \V{\rho}^+$. Applying multiple MM updates \eqref{eq:update_rho_explicit} to get a more precise update for $\V{\rho}$ typically does not pay in practice as $\V{\beta_0}$ and $\V{a}$ may change quite a bit during early rounds of updates. Moreover, as discussed below, taking only a single MM update is taken per round of block coordinate updates does not change the overall convergence of the BCD algorithm.

\subsubsection*{Complexity and Convergence of BCD}

\begin{algorithm}[t]
Initialize $\Vn{\beta}{0}, \Vn{A}{0},$ and $\Vn{\rho}{0}$ as the solutions to multiband GLS (MGLS).
\begin{algorithmic}[1]
  \caption{Block Coordinate Descent + MM for penalized GLS (PGLS)}
  \label{alg:BCDMM}
\State $k \gets 0$
\Repeat 
\For{$b = 1, \ldots, B$} \Comment{Update $\V{\beta_0}$} \label{line:a}
\State $\beta^{(k+1)}_{b0} \gets \langle \V{1}, \M{W}_b[\V{m}_b - \VnE{a}{k}{b}\V{s}_b(\VE{\rho}{b}^{(k)})] \rangle/\langle \V{1}, \M{W}_b\V{1} \rangle$
\EndFor
\State Update $\M{E} \in \Real^{B \times B}$ and $\V{\xi} \in \Real^B$ with $\Vn{\rho}{k}$ and $\Vn{\beta}{k+1}_0$ \Comment{Update $\V{a}$} 
\State $\Vn{A}{k+1} \gets \left [\M{E} - \gamma_1 \Vtilde{a}\Vtilde{a}\Tra \right]\Inv \V{\xi}$ \label{line:b}
\State Update $\M{F} \in \Real^{B \times B}$ and $\V{\zeta} \in \Real^B$ with $\Vn{a}{k+1}, \Vn{\beta}{k+1}_0$, and $\Vn{\rho}{k}$ \Comment{Update $\V{\rho}$} \label{line:c}
\State $\Vn{\rho}{k+1} \gets \left [\M{F} - \gamma_2\V{1}\V{1}\Tra \right ] \Inv \V{\zeta}$
\State $k \gets k+1$
\Until{convergence}
\end{algorithmic}
\end{algorithm}

\Alg{BCDMM} provides pseudocode of the BCD algorithm. We update the intercepts $\V{\beta_0}$, amplitudes $\V{a}$, and phases $\V{\rho}$ in round robin fashion until the relative change in the variables falls below a defined tolerance.

A single round of block updates is computationally efficient. Updating $\V{\beta_0}$ requires $\mathcal{O}(N)$ operations where $N = \sum_{b=1}^B \VE{n}{b}$. Updating $\V{a}$ requires computing the diagonal matrix $\M{E} \in \Real^{B \times B}$ and vector $\V{\xi} \in \Real^N$ which requires $\mathcal{O}(N)$ operations. Solving the linear system in Line~7 of \Alg{BCDMM} requires $\mathcal{O}(B)$ operations according to the explicit update given in \Eqn{update_a}. Similarly, setting up and solving the linear system in Line~9 to update $\V{\rho}$ requires $\mathcal{O}(N)$ and $\mathcal{O}(B)$ operations. Thus, the total amount of work per round of block coordinate descent is $\mathcal{O}(N)$ or in other words linear in the size of the data.


We end this section with the following convergence result for \Alg{BCDMM} 
\begin{proposition}
\label{prop:convergence}
The iterates $(\Vn{\beta_0}{k},\Vn{a}{k},\Vn{\rho}{k})$ of \Alg{BCDMM} tend to stationary points of the PNNL at a fixed frequency $\omega$.
\end{proposition}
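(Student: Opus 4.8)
The plan is to recognize \Alg{BCDMM} as an instance of block successive upper-bound minimization and to verify the hypotheses that force every limit point of the iterates to be a stationary point of $f$. Write $\Vn{x}{k} = (\Vn{\beta}{k}_0,\Vn{a}{k},\Vn{\rho}{k})$ and note that $f$ in \eqref{eq:multiband_problem} is continuously differentiable in its arguments, being a finite sum of squared sinusoids plus the two quadratic penalties. Each of the three block updates minimizes a surrogate of $f$ over its block with the other two frozen: for $\V{\beta_0}$ and $\V{a}$ the surrogate is $f$ itself (Updates 1 and 2 minimize strictly convex quadratics exactly), while for $\V{\rho}$ the surrogate is $g(\V{\rho}\mid\Vtilde{\rho}) + \gamma_2 J_2(\V{\rho})$ of \Prop{majorization} together with \eqref{eq:mm_update}. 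First I would check, for each block, that the surrogate (i) dominates $f$ restricted to that block, (ii) is tangent to $f$ at the current iterate, and (iii) has the same block-gradient as $f$ there. For $\V{\beta_0}$ and $\V{a}$ these hold trivially. For $\V{\rho}$, (i) and (ii) are exactly the content of \Prop{majorization}, and (iii) holds because the only first-order term of $g$ is $f_b'(\tilde{\rho}_b)(\VE{\rho}{b}-\tilde{\rho}_b)$, so $\partial_{\rho_b} g(\Vtilde{\rho}\mid\Vtilde{\rho}) = f_b'(\tilde{\rho}_b)$, with the convex penalty $\gamma_2 J_2$ carried exactly in both the surrogate and $f$.

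Next I would establish the descent property and convergence of the objective values. Updates 1 and 2 are exact minimizations, hence non-increasing; the single MM step of Update 3 is non-increasing by the MM principle \eqref{eq:generic-MM-iterate}, since by \Prop{majorization}
\[
f(\Vn{\beta}{k+1}_0,\Vn{a}{k+1},\Vn{\rho}{k+1}) \le g(\Vn{\rho}{k+1}\mid\Vn{\rho}{k}) + \gamma_2 J_2(\Vn{\rho}{k+1}) \le g(\Vn{\rho}{k}\mid\Vn{\rho}{k}) + \gamma_2 J_2(\Vn{\rho}{k}) = f(\Vn{\beta}{k+1}_0,\Vn{a}{k+1},\Vn{\rho}{k}),
\]
the middle inequality holding because $\Vn{\rho}{k+1}$ minimizes the surrogate \eqref{eq:mm_update}. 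Thus $f(\Vn{x}{k})$ is monotonically non-increasing, and since $\ell\ge 0$, $J_1\ge 0$, and $J_2\ge 0$ it is bounded below and therefore converges.

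The main obstacle is compactness of the iterate sequence, which I would handle by a coercivity-plus-periodicity argument. The intercept block is coercive because $\V{1}\Tra\M{W}_b\V{1} = \kappa_b > 0$; the amplitude block is strictly convex, since $\gamma_1 > 0$ makes it coercive in every direction orthogonal to $\Vtilde{a}$ and the weighted least-squares term supplies coercivity along $\Vtilde{a}$ under a mild genericity condition on the sampling times (some band with $\tilde a_b\neq 0$ has $\V{s}_b(\VE{\rho}{b})\Tra\M{W}_b\V{s}_b(\VE{\rho}{b})>0$). The phase penalty $J_2$ is coercive only in the directions orthogonal to $\V{1}$; along $\V{1}$ the map $\V{\rho}\mapsto f(\V{\beta_0},\V{a},\V{\rho})$ is merely $2\pi$-periodic, so the sublevel sets are unbounded in that single direction. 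This is resolved by observing that $f$ is invariant under $\V{\rho}\mapsto\V{\rho}+2\pi\V{1}$ and that the whole trajectory is equivariant under this shift, since the updates depend on $\V{\rho}$ only through $\V{s}_b,\V{c}_b$ and through the shift-equivariant linear system \eqref{eq:mm_system}. Folding the $\V{\rho}$-iterates into one period of the $\V{1}$ direction therefore places the entire sequence in a compact set without altering any gradient, so a convergent subsequence exists.

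Finally, with conditions (i)--(iii) in hand, with each block subproblem having a unique minimizer (the $\V{\beta_0}$ and $\V{a}$ problems are strictly convex as above, and $g(\V{\rho}\mid\Vtilde{\rho})+\gamma_2 J_2$ is strongly convex as noted after \eqref{eq:mm_update}), and with compact sublevel sets, I would conclude stationarity in either of two equivalent ways. One route is to invoke the unified convergence theorem for block successive upper-bound minimization of Razaviyayn, Hong and Luo, whose hypotheses have just been verified, yielding that every limit point is a coordinatewise minimum. A self-contained route argues directly in the spirit of Grippo and Sciandrone: strong convexity of the surrogates forces a per-step decrease bounded below by a multiple of $\lVert \Vn{x}{k+1}-\Vn{x}{k}\rVert^2$, so summability of the decreases gives $\lVert \Vn{x}{k+1}-\Vn{x}{k}\rVert\to 0$, and then continuity of the gradients together with the first-order agreement (iii) shows that any subsequential limit $\V{x}^\star$ satisfies the block-stationarity conditions. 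Because the blocks are unconstrained during the algorithm and $f$ is smooth, the vanishing of every block-gradient at $\V{x}^\star$ is precisely $\nabla f(\V{x}^\star)=\V{0}$, i.e. $\V{x}^\star$ is a stationary point of the PNLL. The conceptual crux, and what legitimizes taking only a single inexact MM step on $\V{\rho}$, is condition (iii): first-order agreement between the MM surrogate and $f$ guarantees that a fixed point of the surrogate minimization is a genuine stationary point of the true objective, even though Update 3 never minimizes the $\V{\rho}$-block exactly.
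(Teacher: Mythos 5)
Your plan is correct in substance but takes a genuinely different route from the paper's proof. The paper packages \Alg{BCDMM} as a single continuous algorithm map $\psi=\psi_3\circ\psi_2\circ\psi_1$ (continuity read off the closed-form updates \eqref{eq:update_beta0}, \eqref{eq:update_a}, and \eqref{eq:update_rho_explicit}), verifies strict descent off fixed points, and invokes Meyer's monotone convergence theorem to conclude that cluster points are fixed points of $\psi$; stationarity of fixed points is then read off from \eqref{eq:maj_stationarity}, which is exactly your condition (iii) that the MM surrogate's $\V{\rho}$-gradient agrees with that of $f$ at the anchor. You instead verify the block successive upper-bound minimization hypotheses (Razaviyayn--Hong--Luo), or self-containedly run a Grippo--Sciandrone sufficient-decrease argument, using strong convexity of the three block surrogates to get a decrease of order $\lVert\Vn{x}{k+1}-\Vn{x}{k}\rVert^2$ per sweep. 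The two proofs share the identical crux -- first-order tangency of the surrogate is what legitimizes a single inexact MM step on $\V{\rho}$ -- so the difference is scaffolding: Meyer's theorem trades your sufficient-decrease estimates, whose moduli you would still need to bound uniformly over a sublevel set (e.g.\ the surrogate modulus $L_b$ degenerates if $a_b^{(k)}\to 0$, a point the paper also elides when asserting strong convexity after \eqref{eq:mm_update}), for continuity of the update maps, which is free here given the explicit formulas, and Meyer's conclusion (b) hands you $\lVert\Vn{x}{k+1}-\Vn{x}{k}\rVert\to 0$ rather than deriving it. On compactness you are in fact more careful than the paper: the paper simply posits $S=\Real\times\Real\times[-\pi/2,\pi/2]$ and asserts sublevel compactness, even though the unconstrained updates need not keep $\V{\rho}$ in that box and $f$ does not blow up along the $\V{1}$-direction of $\V{\rho}$; your folding argument -- invariance of $f$ and equivariance of the iteration under $\V{\rho}\mapsto\V{\rho}+2\pi\V{1}$, which does hold for \eqref{eq:mm_system} since $\bigl[\M{F}-\frac{\gamma_2}{B}\V{1}\V{1}\Tra\bigr]\V{1}$ has components $L_b$, matching the shift $2\pi L_b$ in $\V{\zeta}$ -- repairs this gap, and the genericity condition you flag for coercivity of the amplitude block along $\Vtilde{a}$ is likewise implicitly needed by the paper's claim that $f$ blows up as $\lVert\V{a}\rVert\to\infty$.
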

The proof is contained in \App{convergence}.

\section{Selection of $\gamma_1$ and $\gamma_2$}
\label{sec:gamma_selection}

\begin{figure}[t]
  \begin{center}
    \begin{includegraphics}[scale=0.35]{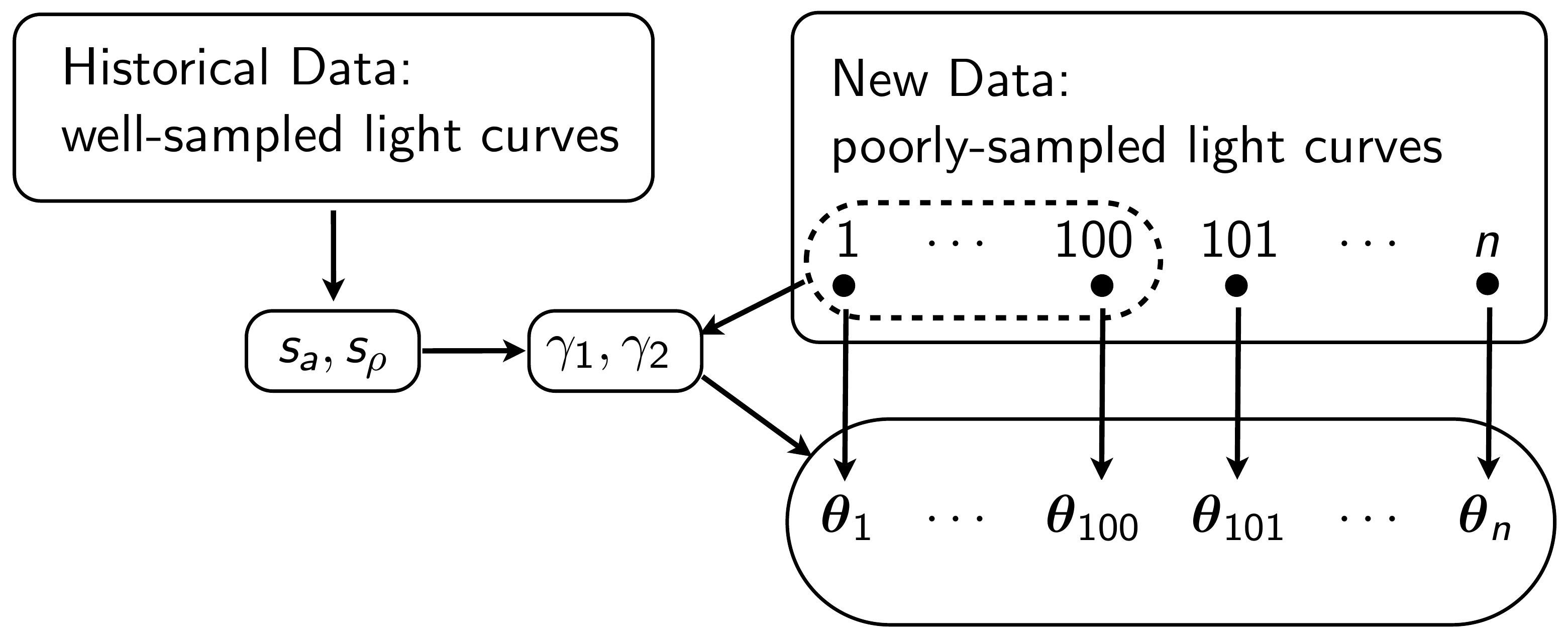}
      \caption{Historical periodic variable star data comprised of well--sampled light curves provide a guide towards selecting tuning parameters. The scatter variables $s_a$ and $s_\rho$ are estimated on historical data. A subset of the new data of poorly--sampled light curves, for example the first 100 light curves, is used to select the tuning parameters $\gamma_1$ and $\gamma_2$. We set $\gamma_1$ and $\gamma_2$ such that the scatter of amplitudes and phases approximately matches the scatter of the historical data. The $i$th lightcurve in the new data is used to estimate the $i$th set of model parameters $\V{\theta}_i$, namely the $i$th period, amplitude, phase, and intercept.\label{fig:selection}}
    \end{includegraphics}
  \end{center}
\end{figure}

Judiciously choosing the amount of regularization in PGLS is critical to its performance. In practice, regularization parameters are often selected using cross-validation \citep{HasTib2001}. The main drawback to cross-validation is that it is computationally intensive. Here we propose a computationally efficient and simple data-driven alternative that leverages the availability of historical data of well--observed periodic variables.

Recall that the regularization parameter $\gamma_1$ controls the degree of shrinkage in the amplitude vector $\V{a}$. Small values of $\gamma_1$ lead to less shrinkage (amplitudes far from a multiple of $\Vtilde{a}$), and large values of $\gamma_1$ lead to more shrinkage (amplitudes close to a multiple of $\Vtilde{a}$). In astronomy, there is an abundance of well--observed light curves where methods such as MGLS correctly estimate periods, amplitudes, and phases. Using this historical data we can estimate the scatter of the amplitudes around $\Vtilde{a}$ and the phases around $\V{1}$. We tune $\gamma_1$ and $\gamma_2$ so that the scatter for the poorly observed amplitude and phase estimates approximately matches the scatter observed in the historical data.

We detail this process now for selecting $\gamma_1$. Selection of $\gamma_2$ is analogously accomplished by substituting the vector $B^{-1/2}\V{1}$ for the vector $\Vtilde{a}$. Let $\V{a}'_1,\ldots,\V{a}'_m$ be the amplitudes from a historical set of well--observed light curves. One could use MGLS to estimate these quantities. We assume these quantities have little measurement error, because the light curves have been well--observed. In the simulated and real data examples of Section \ref{sec:data}, we assume we have access to $m=100$ light curves. Let $\Vtilde{a}$ be the normalized mean of the amplitudes and define the scatter of the amplitudes around $\Vtilde{a}$ as
\begin{equation}
\label{eq:s_a}
s_a \amp = \amp \median{i \in \{1,\ldots,m\}}\V{a'}_i\Tra \left (\M{I} - \Vtilde{a}{\Vtilde{a}}\Tra \right ) \V{a}_i'.
\end{equation}

Given a set of poorly observed light curves $D_1,\ldots,D_n$, we expect the scatter in the amplitudes to approximately match $s_a$. For some trial value of the tuning parameter $\gamma_1$ define the amplitude fit for $D_i$ using PGLS as
\begin{align*}
\V{a}_i(\gamma_1) &\amp= \amp \argmin{\V{a}} \min_{\omega,\V{\beta_0},\V{\rho}} f(\omega,\V{\beta_0},\V{a},\V{\rho} \mid D_i;\gamma_1,0)
\end{align*}
and the resulting scatter in amplitudes as
\begin{equation}
\label{eq:sa}
s_a(\gamma_1) \amp = \amp \median{i\in\{1,\ldots,n\}} \V{a}_i(\gamma_1)\Tra \left (\M{I} - \Vtilde{a}{\Vtilde{a}}\Tra \right ) \V{a}_i(\gamma_1).
\end{equation}
As $\gamma_1$ increases, the amplitude estimates $\V{a}_i(\gamma_1)$ are pulled towards $\Vtilde{a}$, causing $s_a(\gamma_1)$ to decrease. As $\gamma_1$ decreases, the amplitude estimates $\V{a}_i(\gamma_1)$ are pushed away from $\Vtilde{a}$, causing $s_a(\gamma_1)$ to increase. We tune $\gamma_1$ such that $s_a(\gamma_1)$ is approximately equal to $s_a$. Since $s_a(\gamma_1)$ is inversely proportional to $\gamma_1$, we perform a binary search to find the optimal value of $\gamma_1$, using a $\log$ linear grid to find initial lower and upper bounds. We terminate the search when an update to $\gamma_1$ does not alter any of the period estimates more than 1\%, implying that further changes to $\gamma_1$ will not significantly alter the resulting period estimates. An analogous procedure is used for selecting $\gamma_2$. Since the total number of stars for which we want to estimate periods can be large, we select 100 light curves for determining $s_a(\gamma_1)$ in \eqref{eq:sa}, rather than the full set of light curves for which we seek to estimate periods. A schematic of how historical and new data are used to estimate the various parameters is summarized in Figure~\ref{fig:selection}.

\section{Data Analysis}
\label{sec:data}
\subsection{Simulations}

\begin{figure}[t]
\centering
 \subfloat[Simulated phases.]{\label{fig:simulated_phases}
\includegraphics[scale=.42]{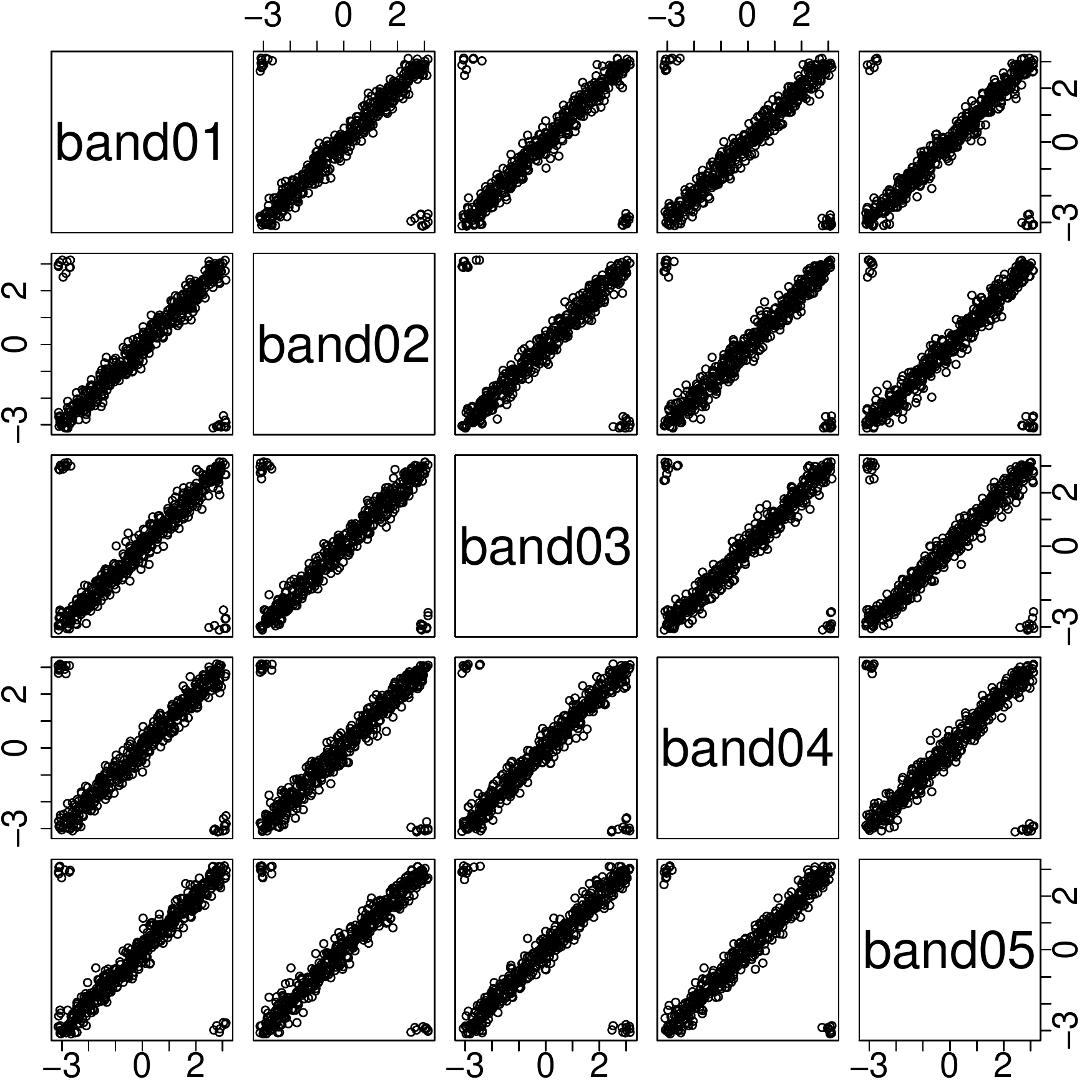}
 } 
 \subfloat[Simulated amplitudes.]{\label{fig:simulated_amps}
\includegraphics[scale=.42]{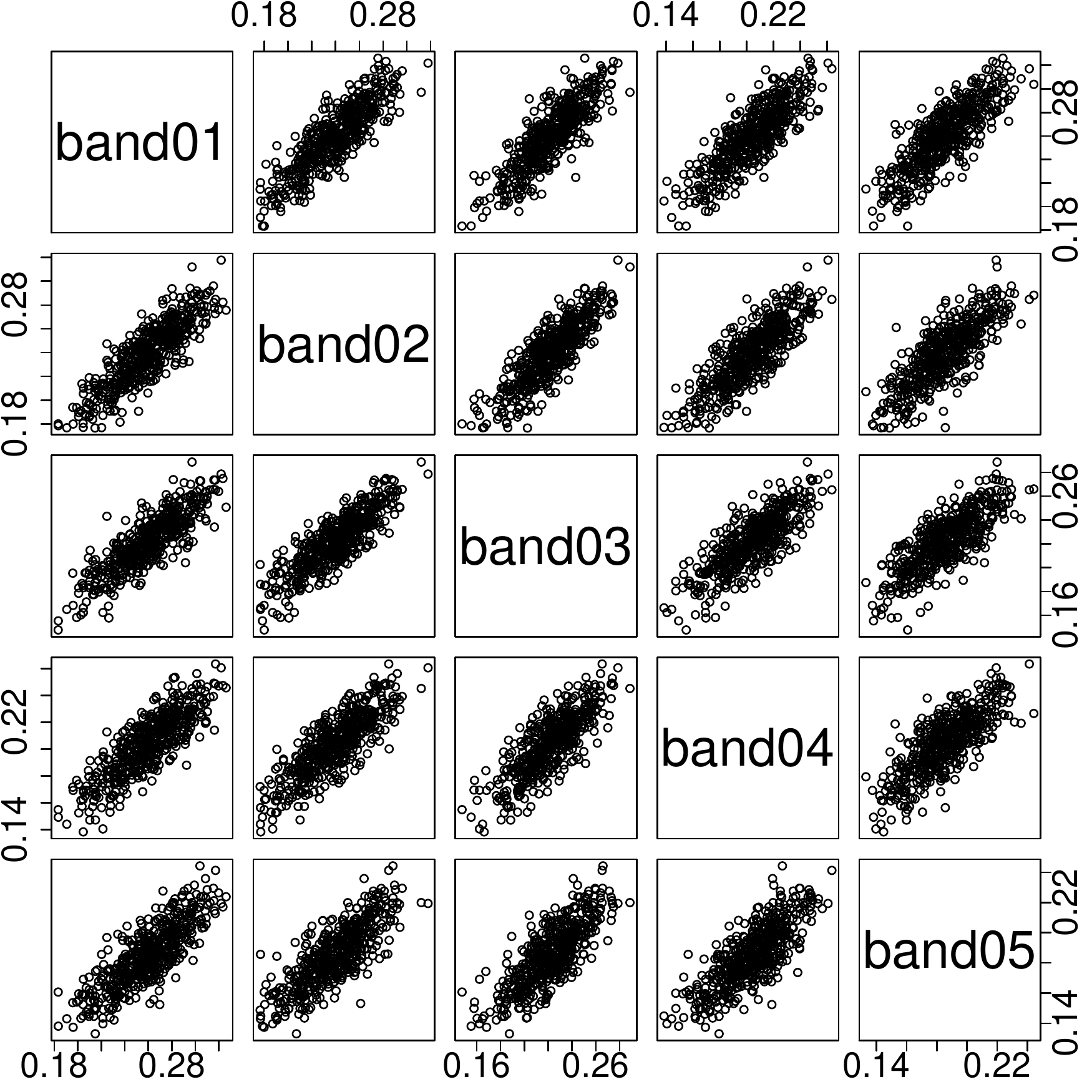}
}
 \caption{Correlations in simulated (a) phases and (b) amplitudes. These approximately match correlations observed in real data, such as in Figures \ref{fig:scatter_A} and \ref{fig:scatter_rho}.}
\end{figure}

We compare MGLS and PGLS by simulating sinusoids and determining period estimation accuracy of each method. We generate 500 5--band light curves from the sinusoidal likelihood model. Figures \ref{fig:simulated_phases} and \ref{fig:simulated_amps} show the distribution of phases and amplitudes. These amplitudes and phases are drawn from distributions meant to approximately match correlations seen in real data such as Figures \ref{fig:scatter_A} and \ref{fig:scatter_rho}. The periods are drawn uniform on $[0.2,1.0]$ days. This represents the plausible range of periods for RR Lyrae type stars which we study using real data in the following section.
  \begin{figure}[p]
\centering
 \subfloat[MGLS with 5 points.]{\label{fig:simulated_pvsp5gls}
\includegraphics[scale=.4]{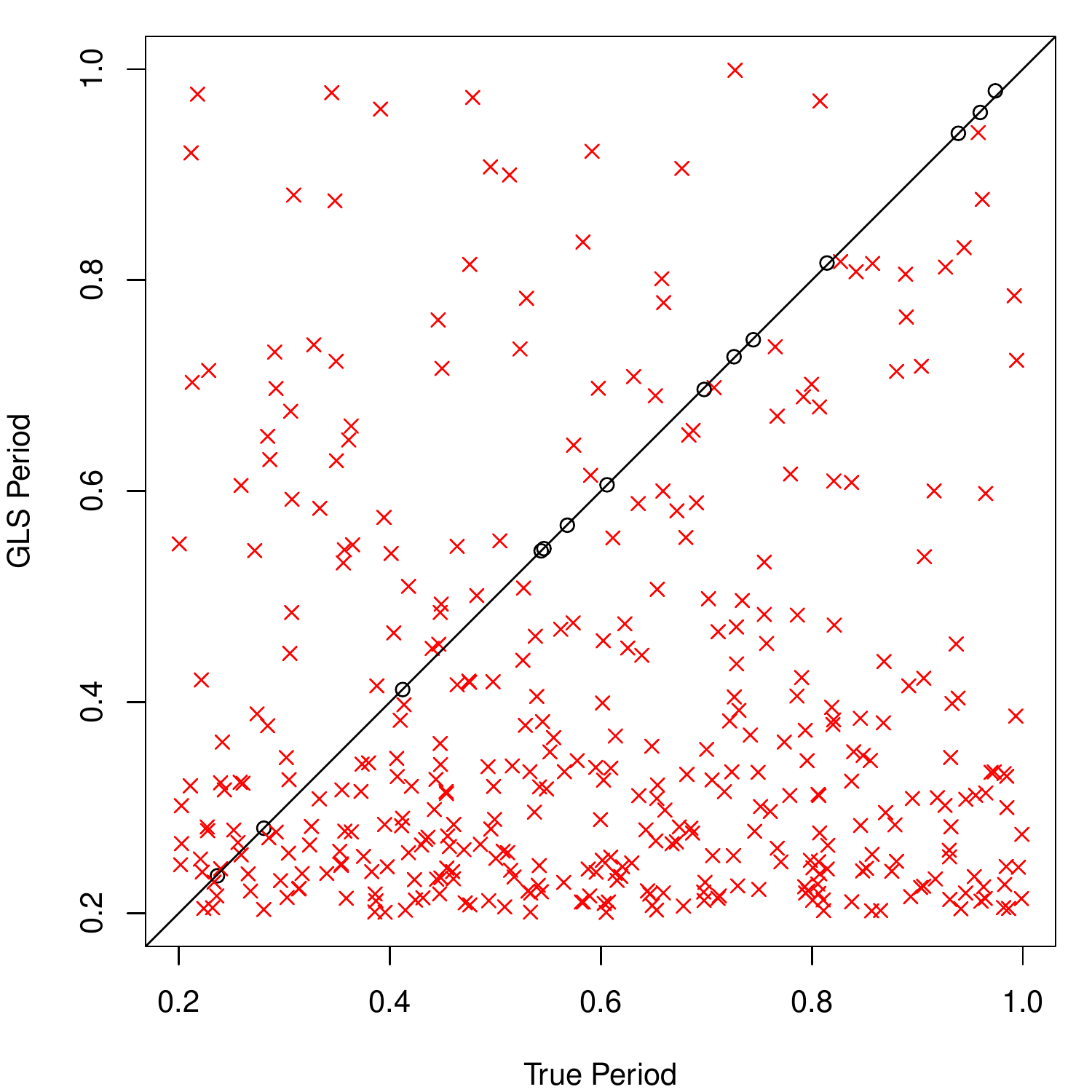}
 } 
 \subfloat[PGLS with 5 points.]{\label{fig:simulated_pvsp5pgls}
\includegraphics[scale=.4]{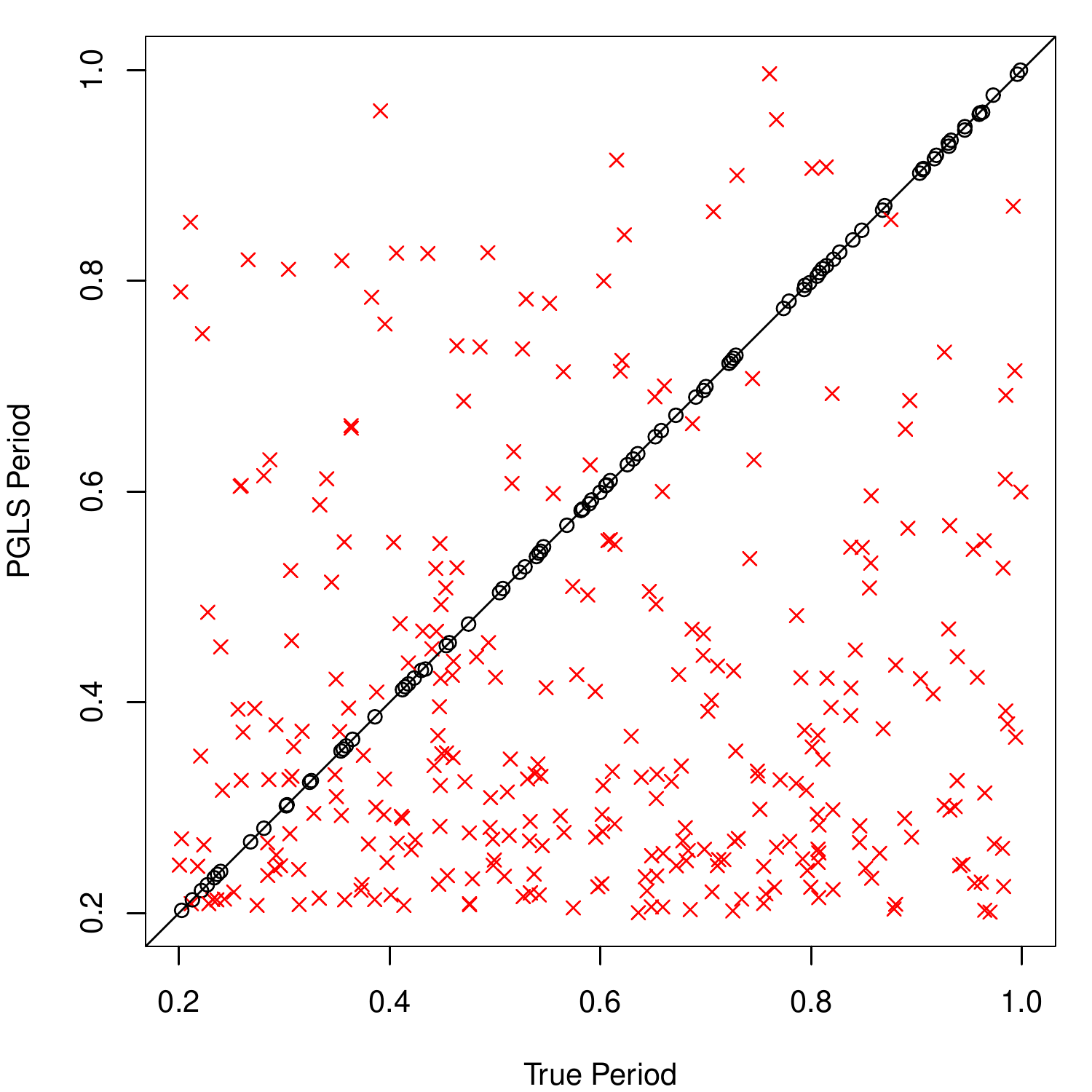}
}\\
 \subfloat[MGLS with 10 points.]{\label{fig:simulated_pvsp10gls}
\includegraphics[scale=.4]{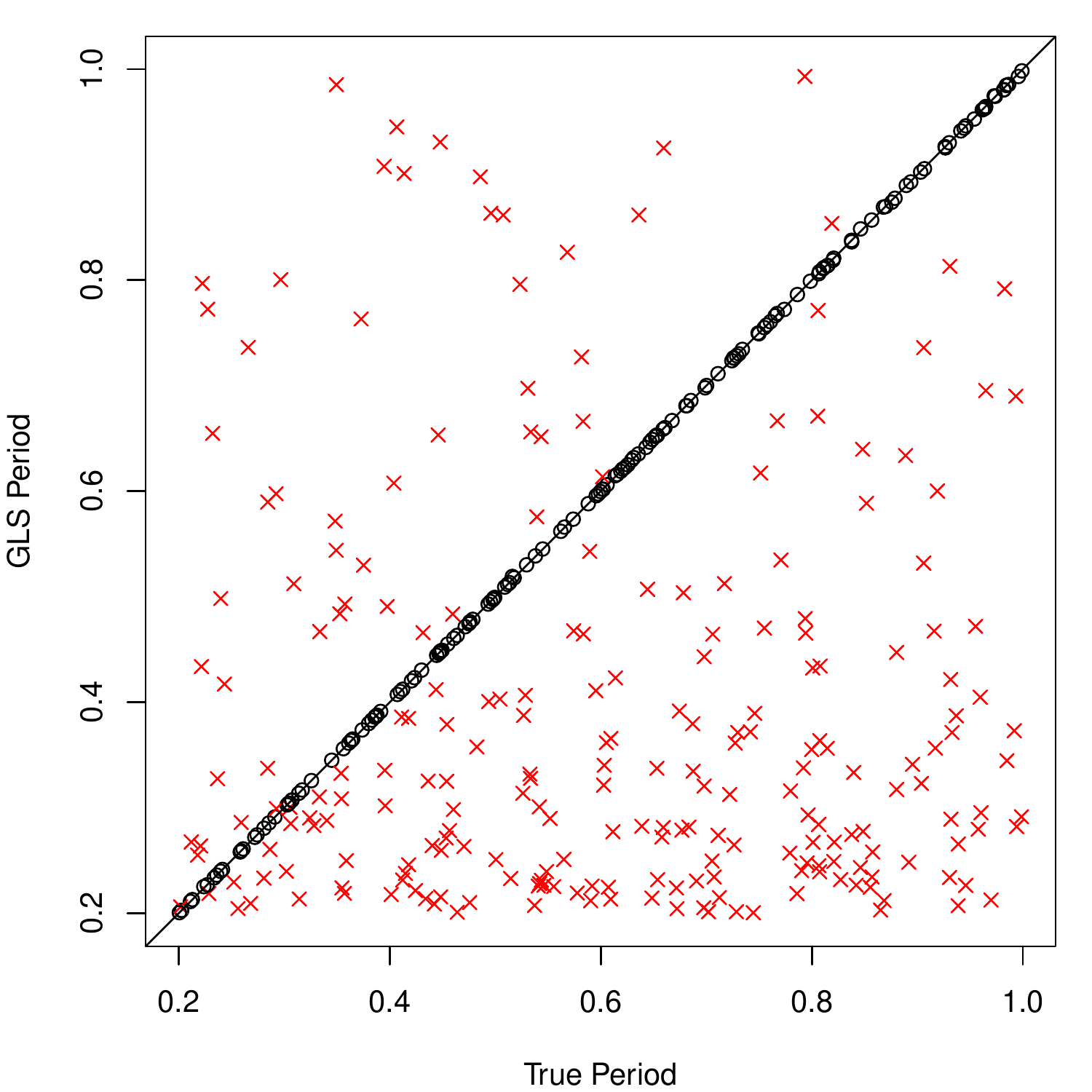}
 } 
 \subfloat[PGLS with 10 points.]{\label{fig:simulated_pvsp10pgls}
\includegraphics[scale=.4]{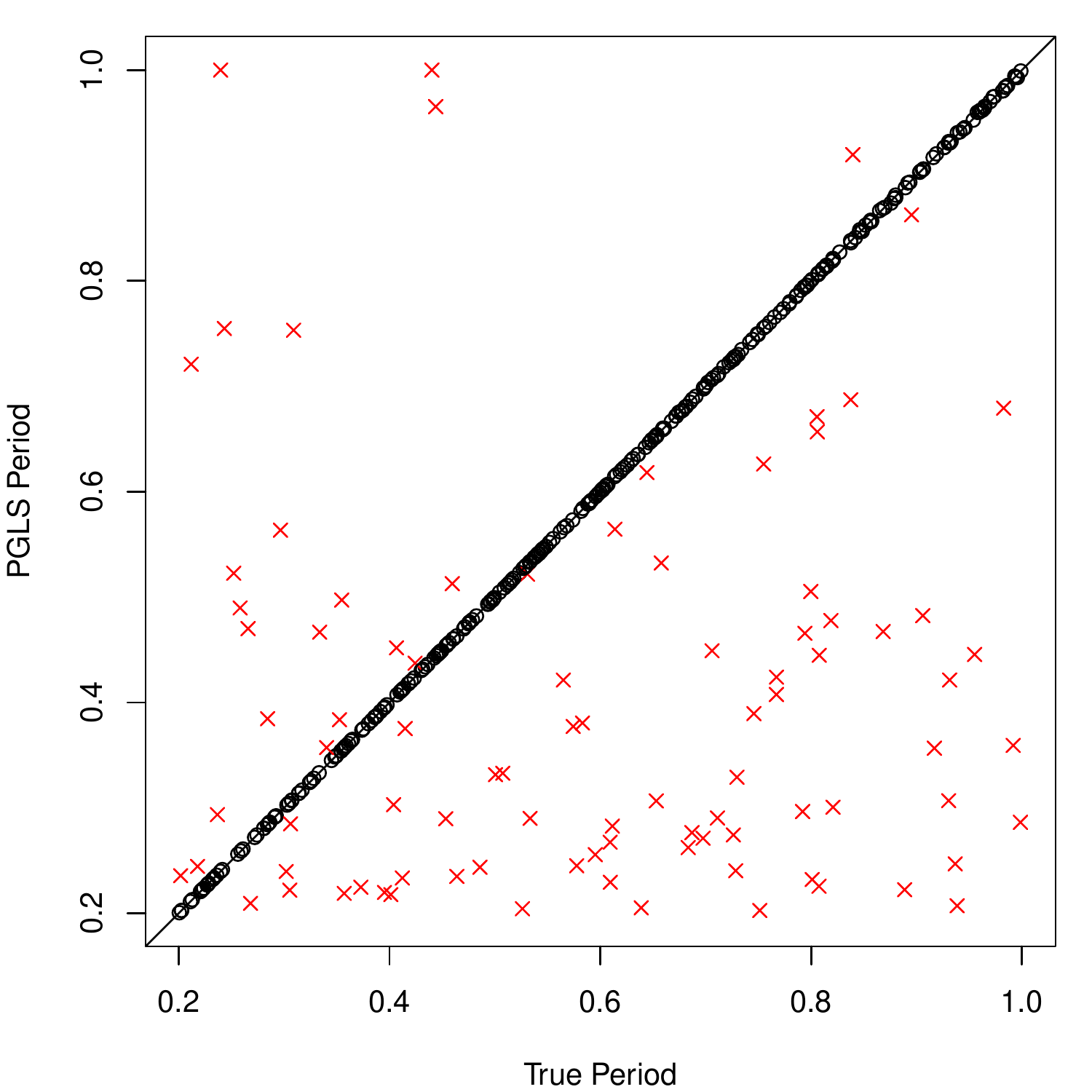}
}\\
 \subfloat[MGLS with 15 points.]{\label{fig:simulated_pvsp15gls}
\includegraphics[scale=.4]{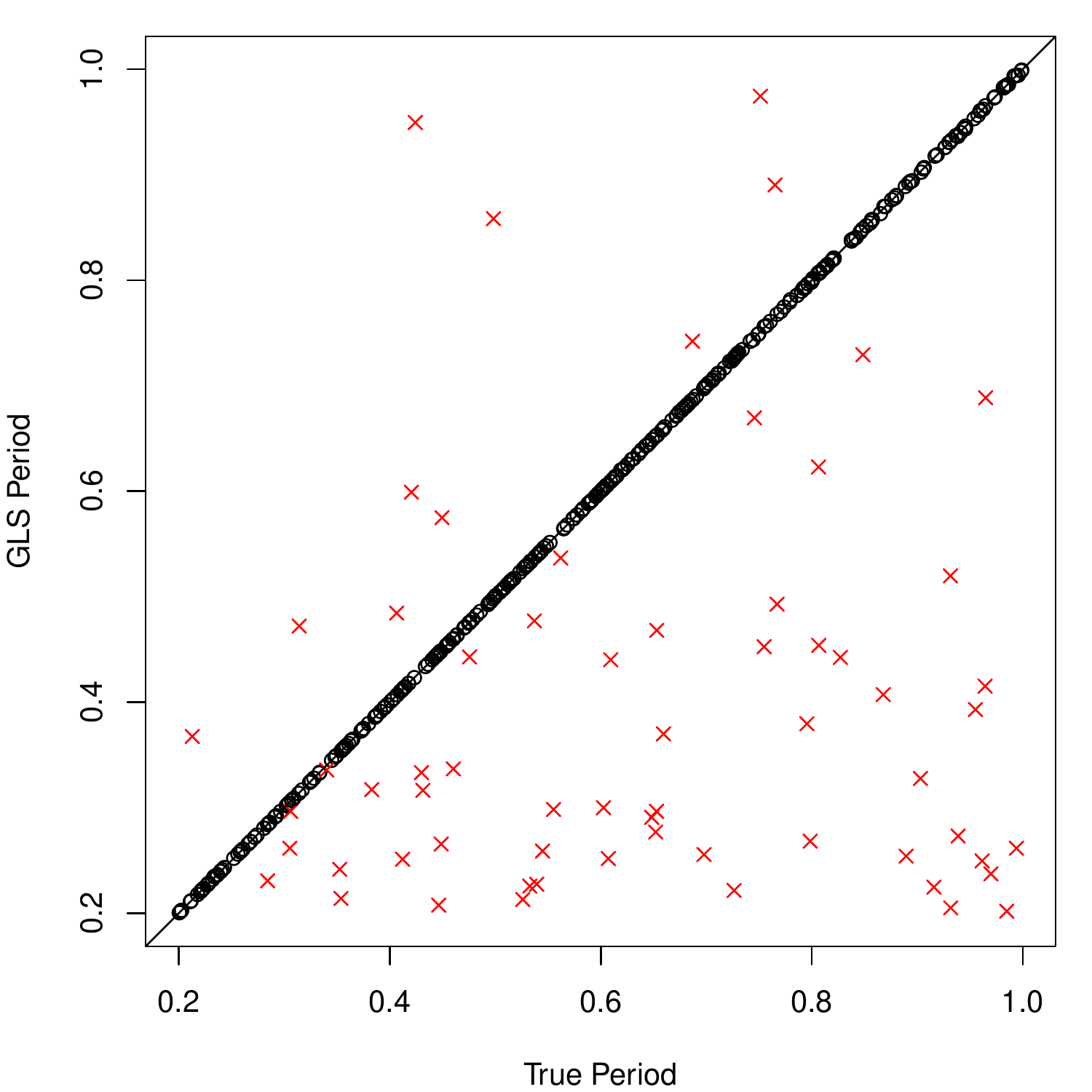}
 } 
 \subfloat[PGLS with 15 points.]{\label{fig:simulated_pvsp15pgls}
\includegraphics[scale=.4]{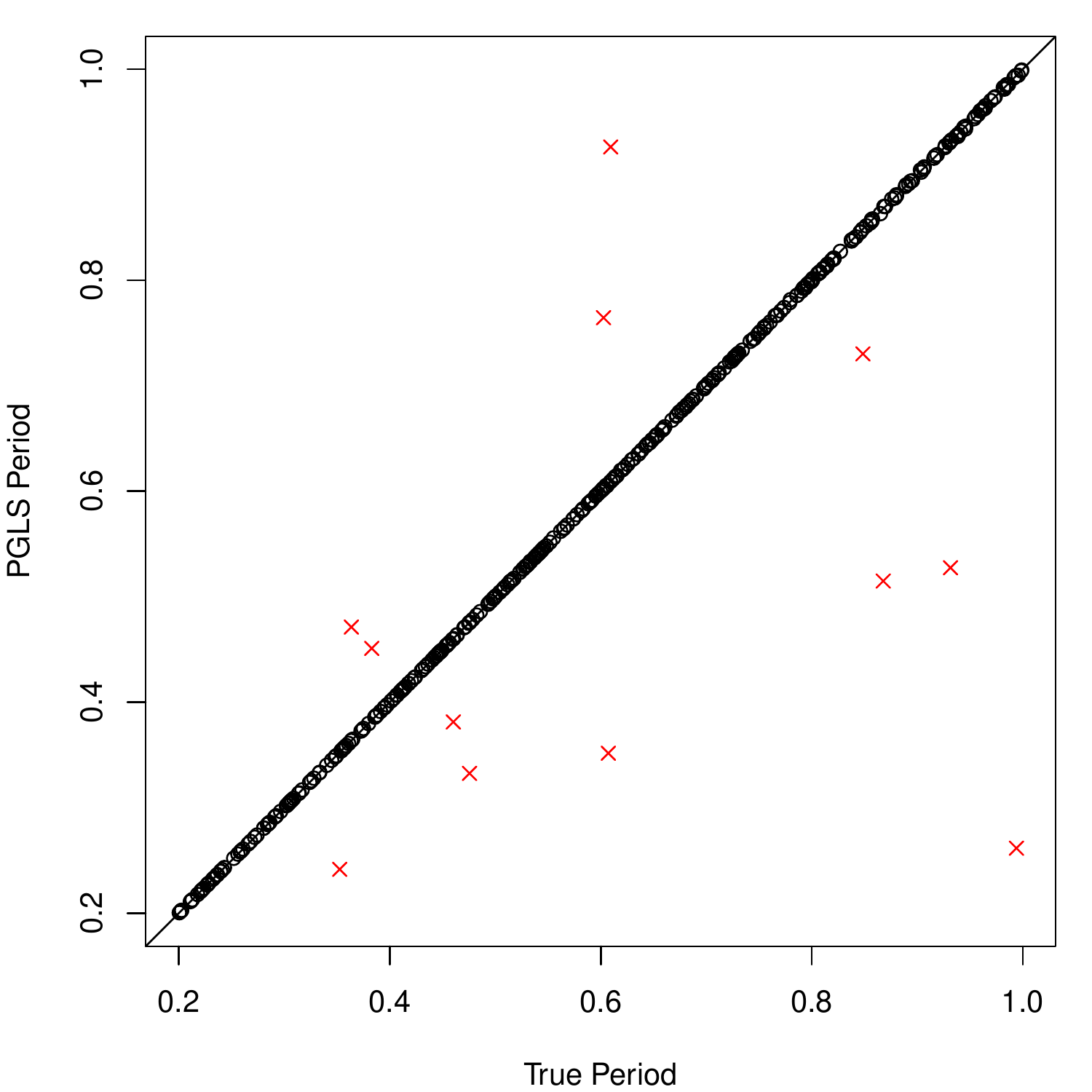}
}
 \caption{Scatterplots of true period versus estimate for MGLS and PGLS with 5, 10, and 15 observations per band for the test set. The MGLS estimates are in the left column and the PGLS estimates are in the right column. The red $\times$ are estimates further than 1\% away from truth while the black $\circ$ are estimates within 1\% of truth. PGLS estimates are significantly more accurate.\label{fig:accuracies}}
\end{figure}

We divide the light curves into two groups: 100 historical/training light curves used for estimating $\tilde{\V{a}}$ and $s_a$ (see \eqref{eq:s_a}) and 400 light curves for testing. We downsample the test light curves to 5, 10, and 15 observations per curve to simulate difficult period recovery regimes. For each number of observations per light curve we tune $\gamma_1$ and $\gamma_2$ according to Section \ref{sec:gamma_selection}.

Scatterplots of true period versus period estimate for MGLS and PGLS for 5, 10, and 15 observations per band are contained in Figure \ref{fig:accuracies}. PGLS improves period estimates in each case.

In Figure \ref{fig:sim_gls_pgls_rho_correlations} we plot phase estimates for MGLS and PGLS for 5 and 10 observation per band test sets. As expected PGLS phases are more concentrated around the vector $\V{1}$. For 15 observations per band, MGLS estimates show significant concentration around $\V{1}$, indicative that MGLS is estimating phases correctly with 15 observations per band. Notice that the PGLS eliminates phase estimates near $(-\pi,\pi)$ and $(\pi,-\pi)$. Since these are plausible phases, this tendancy could result in incorrect period estimates for some stars. A more refined penalty term for phase ($J_2$) in PGLS could address this issue. 
\begin{figure}[t]
\centering
 \subfloat[5 observations.]{\label{fig:simulated_phases_5}
\includegraphics[scale=.42]{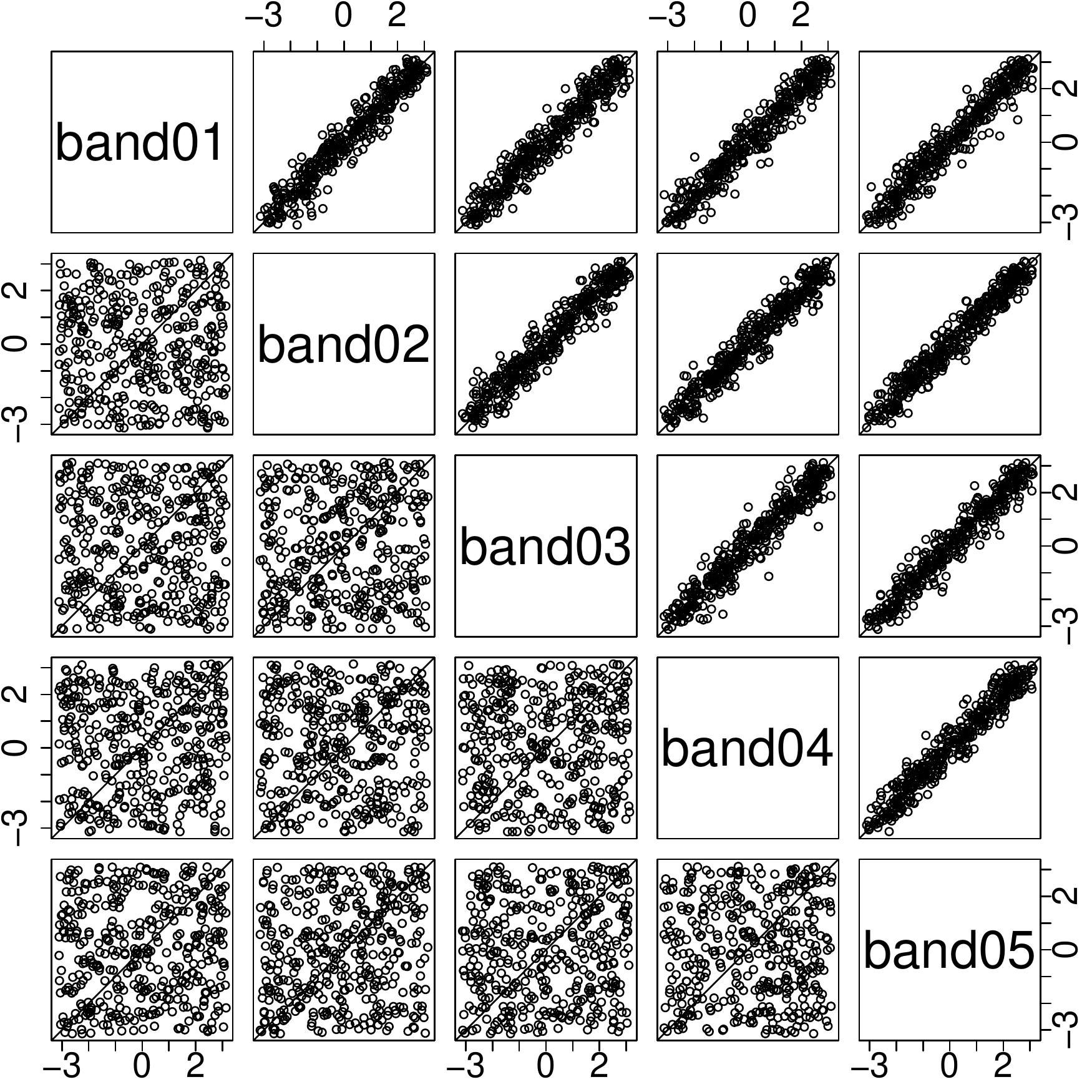}
 } 
 \subfloat[15 observations.]{\label{fig:simulated_phases_15}
\includegraphics[scale=.42]{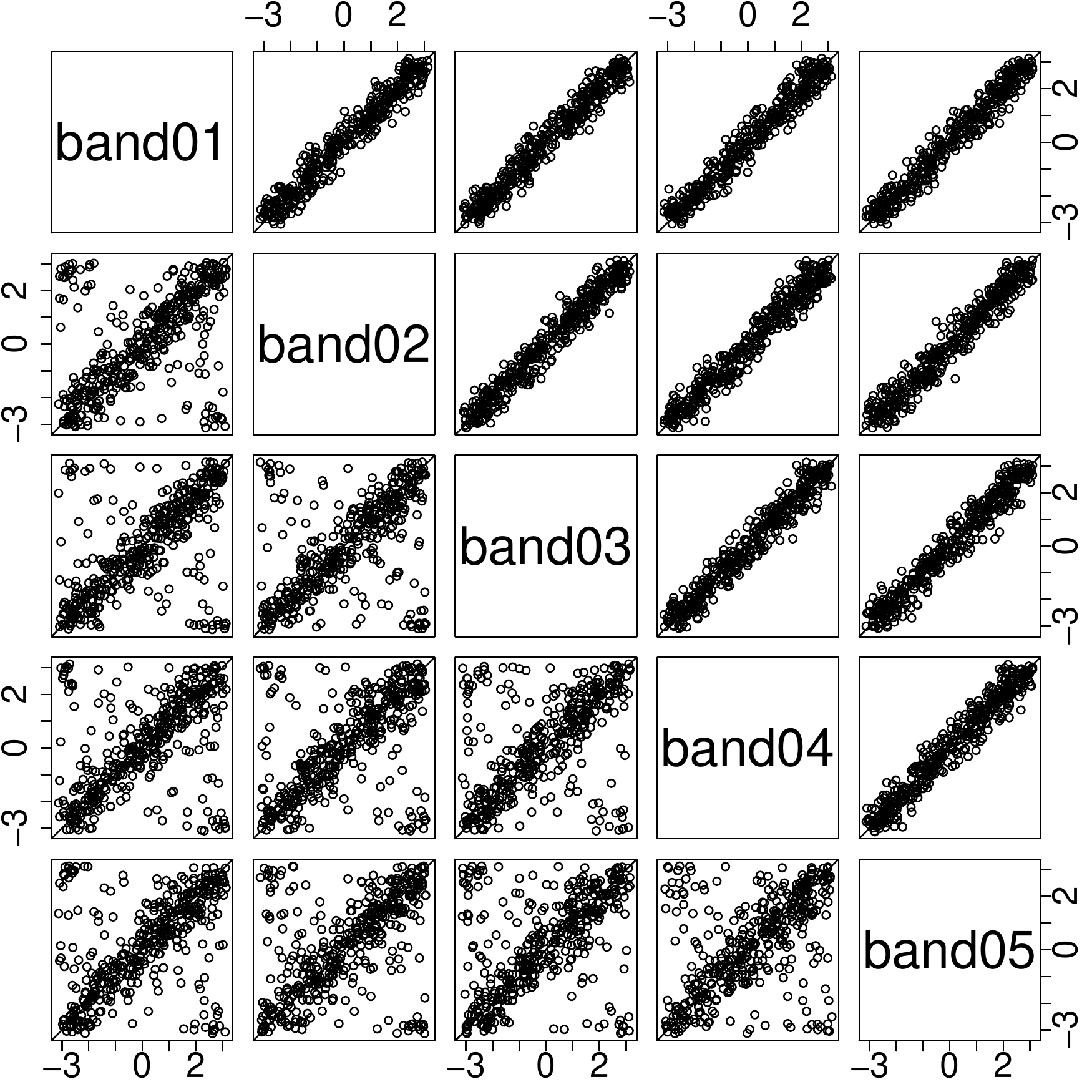}
}
 \caption{Correlations in phase estimates with (a) 5 observations per band and (b) 15 observations per band. The upper diagonal in each plot are PGLS estimates. The lower diagonal are MGLS estimates. The phases estimates using PGLS appear more realistic than MGLS, especially for light curves with 5 observations per band.\label{fig:sim_gls_pgls_rho_correlations}}
\end{figure}



\subsection{SDSS Stripe--82 Data}
\label{sec:sdss}

\cite{sesar2010light} identified 483 periodic variable stars of the class RR Lyrae in the Sloan Digital Sky Survey II (SDSS-II). These light curves were sampled approximately 30 times per band in five bands. \cite{sesar2010light} estimated periods for these stars using a Supersmoother routine in \cite{reimann1994frequency}. Visual inspection suggests these period estimates are accurate.

While Supersmoother is accurate with well--sampled light curves, with poorly--sampled light curves it suffers the same problems as MGLS. This is why \cite{sesar2007exploring} did not attempt period estimation when SDSS-I collected a median of only 10 observations per band for these stars. Ongoing surveys, such as PanSTARRS1, currently have data roughly the quality of SDSS-I in terms of number of observations per band \citep{schlafly2012photometric}. Here we study period estimation with poorly--sampled light curves by downsampling SDSS-II data and comparing MGLS and PGLS period estimates to the Supersmoother estimates computed using the entire light curves.

\begin{table}[t]
\centering
\caption{Fraction of period estimates within 1\% of truth.\label{tab:sdss_tab}} 
\begin{tabular}{rrr}
  \hline
obs. / band & GLS & PGLS \\ 
  \hline
  5 & 0.19 & 0.34 \\ 
   10 & 0.51 & 0.59 \\ 
   15 & 0.64 & 0.64 \\ 
   \hline
\end{tabular}
\end{table}

We obtained 450 of 483 \cite{sesar2010light} RR Lyrae light curves from a public data repository \citep{ivezic2007sloan}. To test period estimation with poorly--sampled light curves we split the 450 light curves into 100 training and 350 test. As with the simulated data, we downsample the test light curves to 5, 10, and 15 measurements per band and compare MGLS and PGLS for period recovery.

\begin{figure}[p]
\centering
 \subfloat[MGLS with 5 points.]{\label{fig:sdss_pvsp5gls}
\includegraphics[scale=.4]{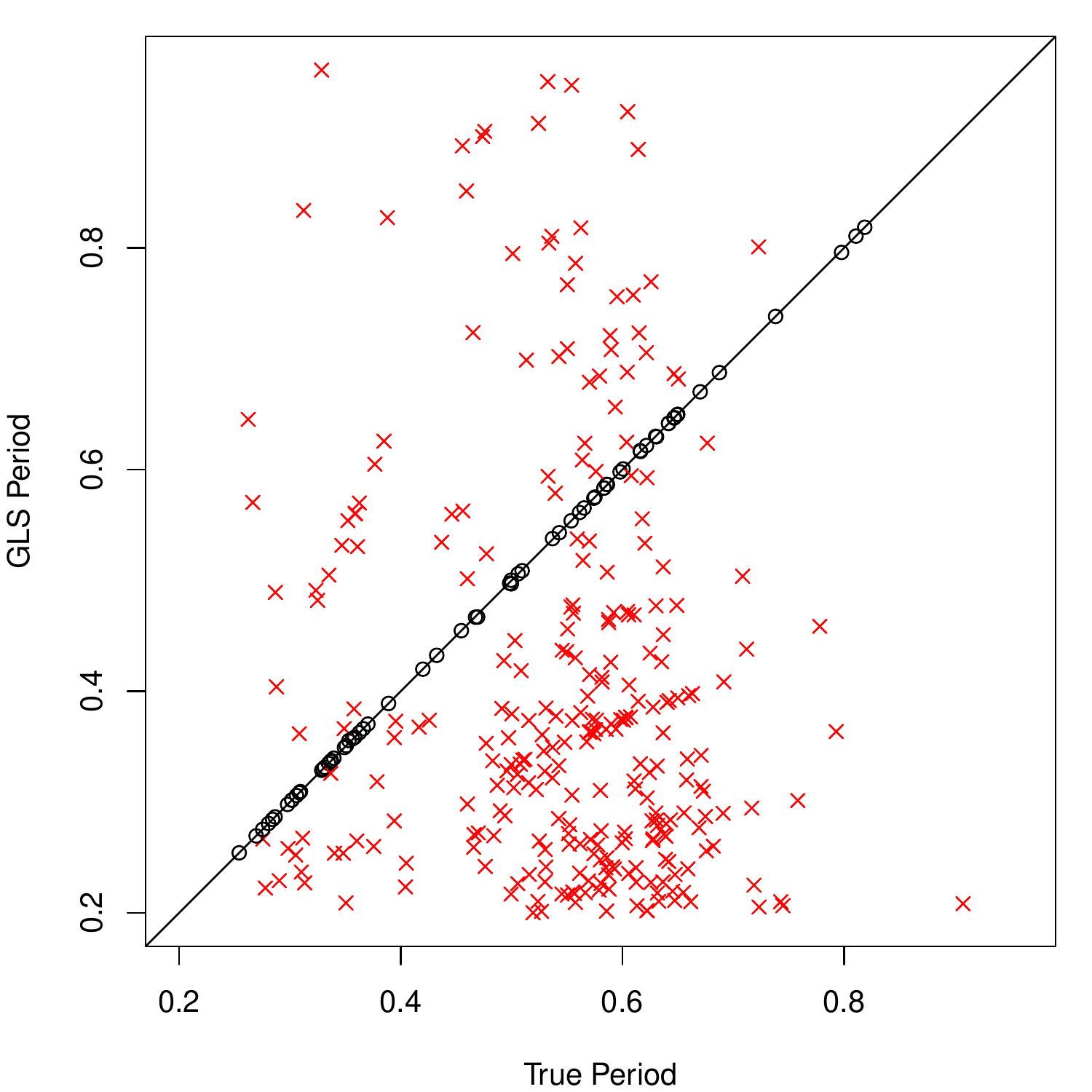}
 } 
 \subfloat[PGLS with 5 points.]{\label{fig:sdss_pvsp5pgls}
\includegraphics[scale=.4]{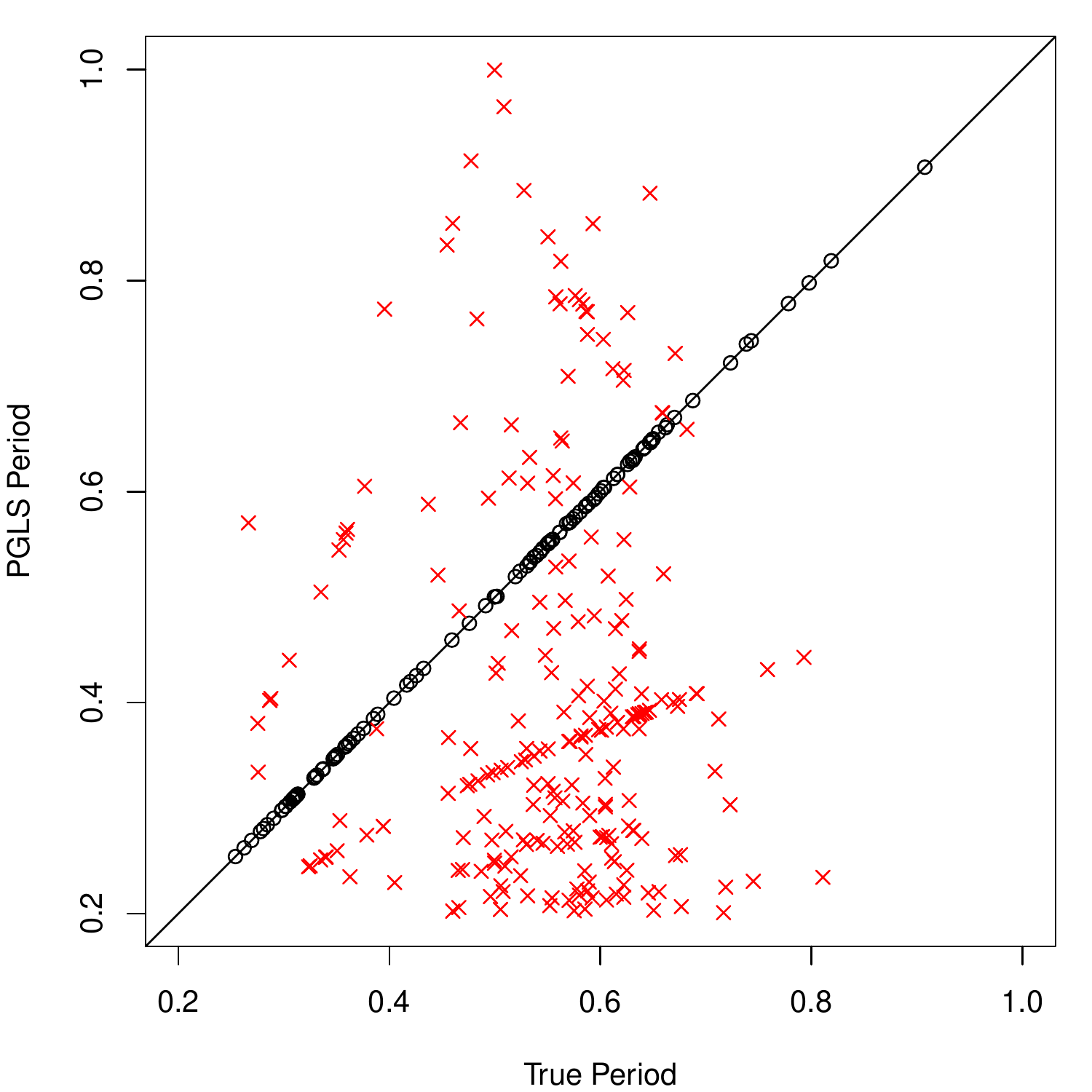}
}\\

 \subfloat[MGLS with 10 points.]{\label{fig:sdss_pvsp10gls}
\includegraphics[scale=.4]{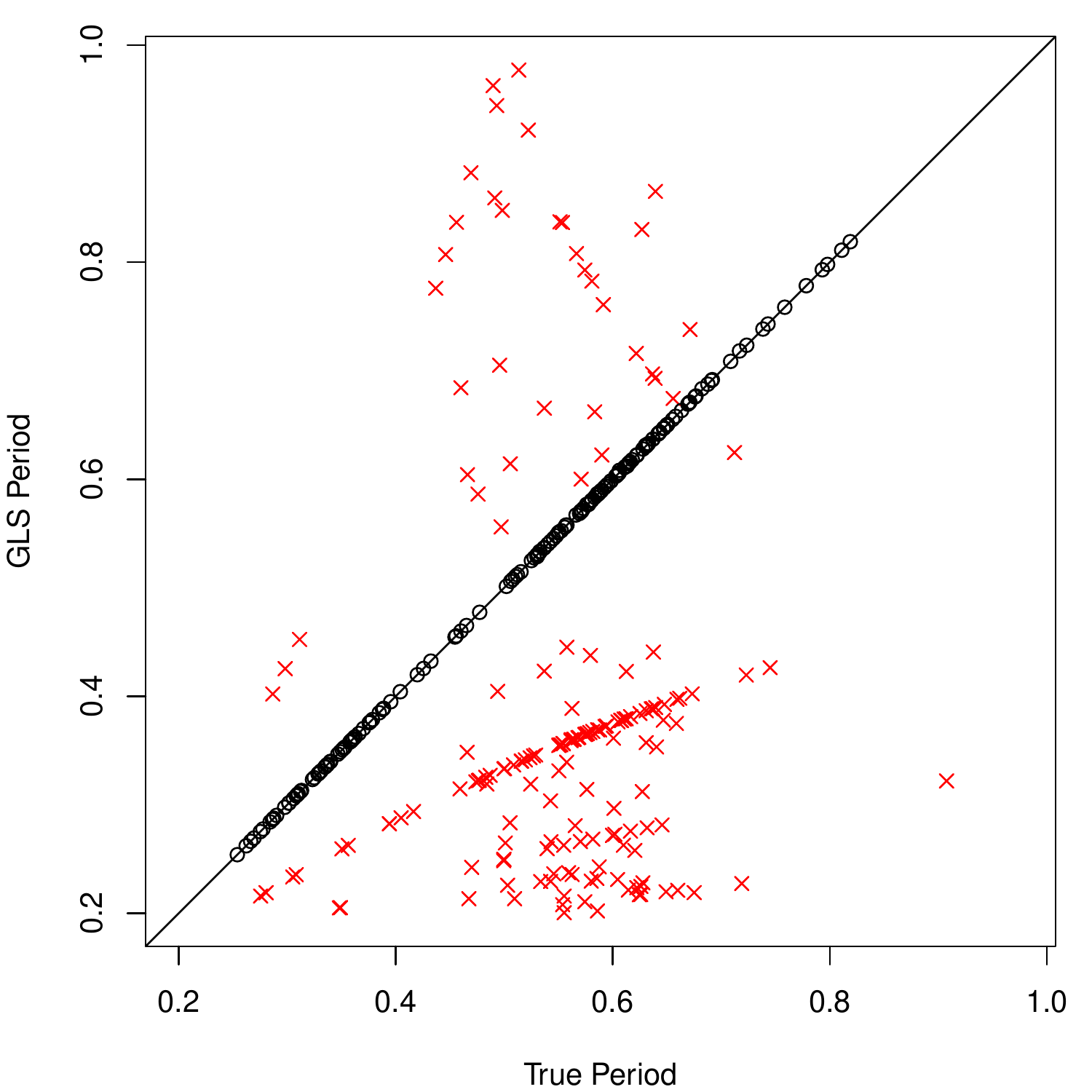}
 } 
 \subfloat[PGLS with 10 points.]{\label{fig:sdss_pvsp10pgls}
\includegraphics[scale=.4]{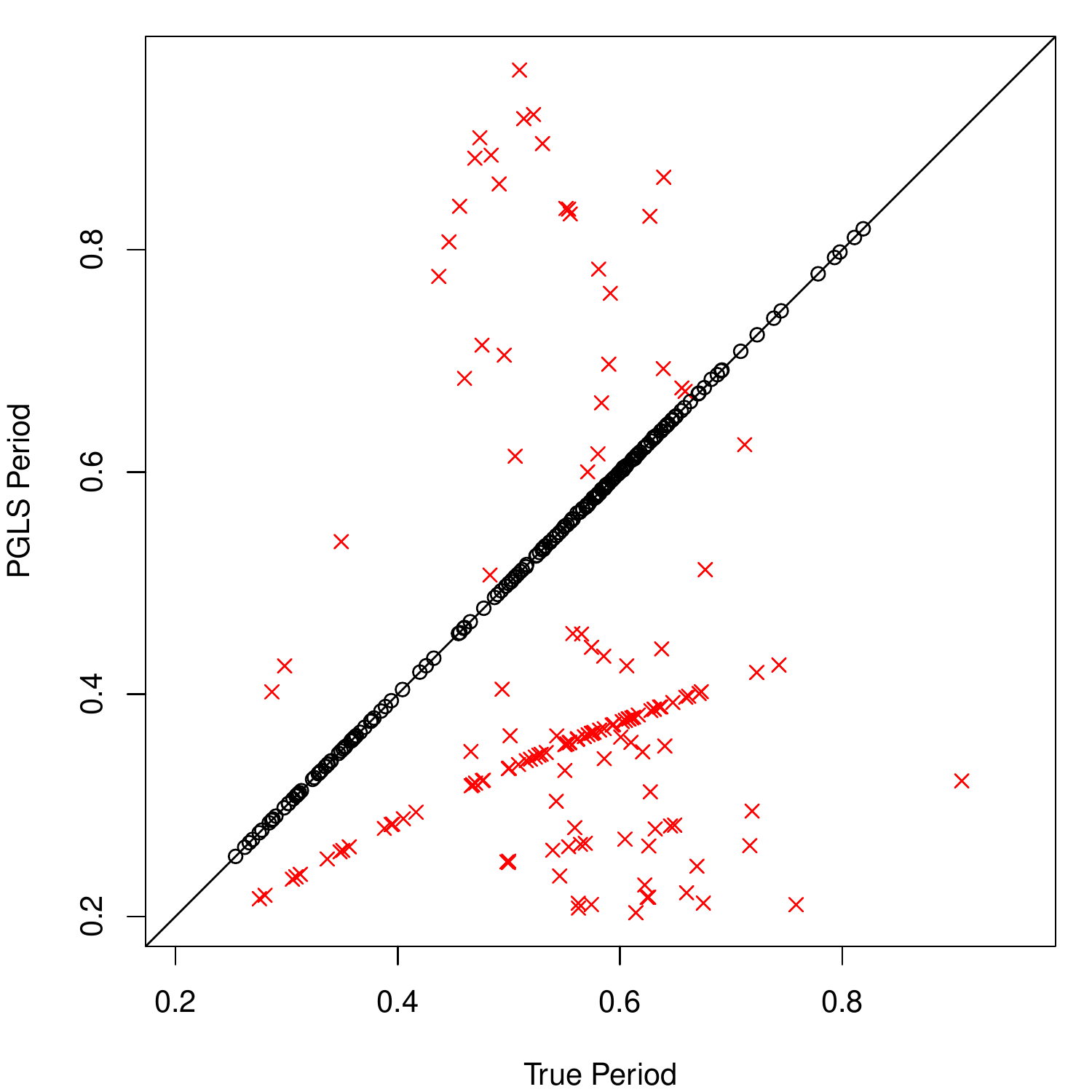}
}\\
 \subfloat[MGLS with 15 points.]{\label{fig:sdss_pvsp15gls}
\includegraphics[scale=.4]{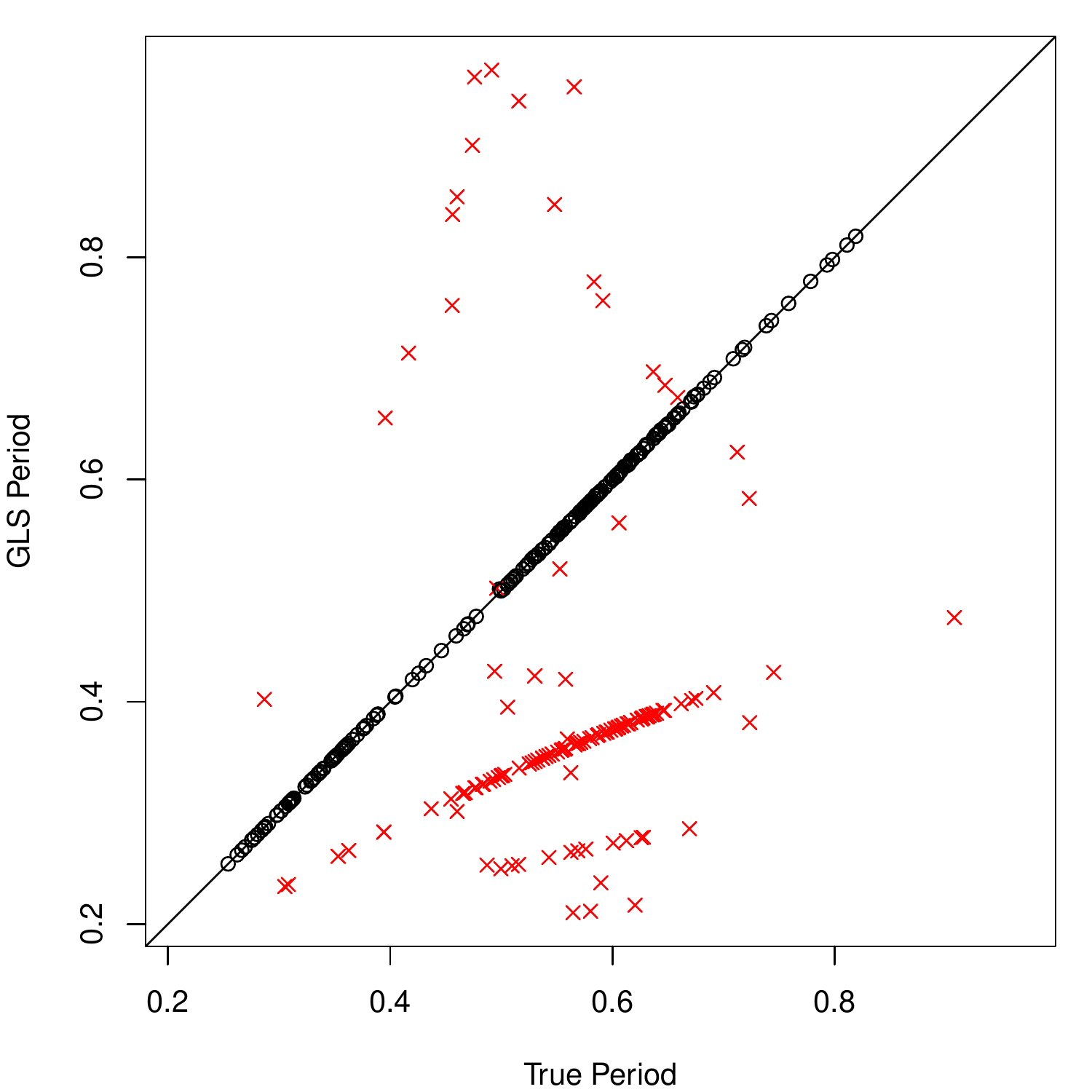}
 } 
 \subfloat[PGLS with 15 points.]{\label{fig:sdss_pvsp15pgls}
\includegraphics[scale=.4]{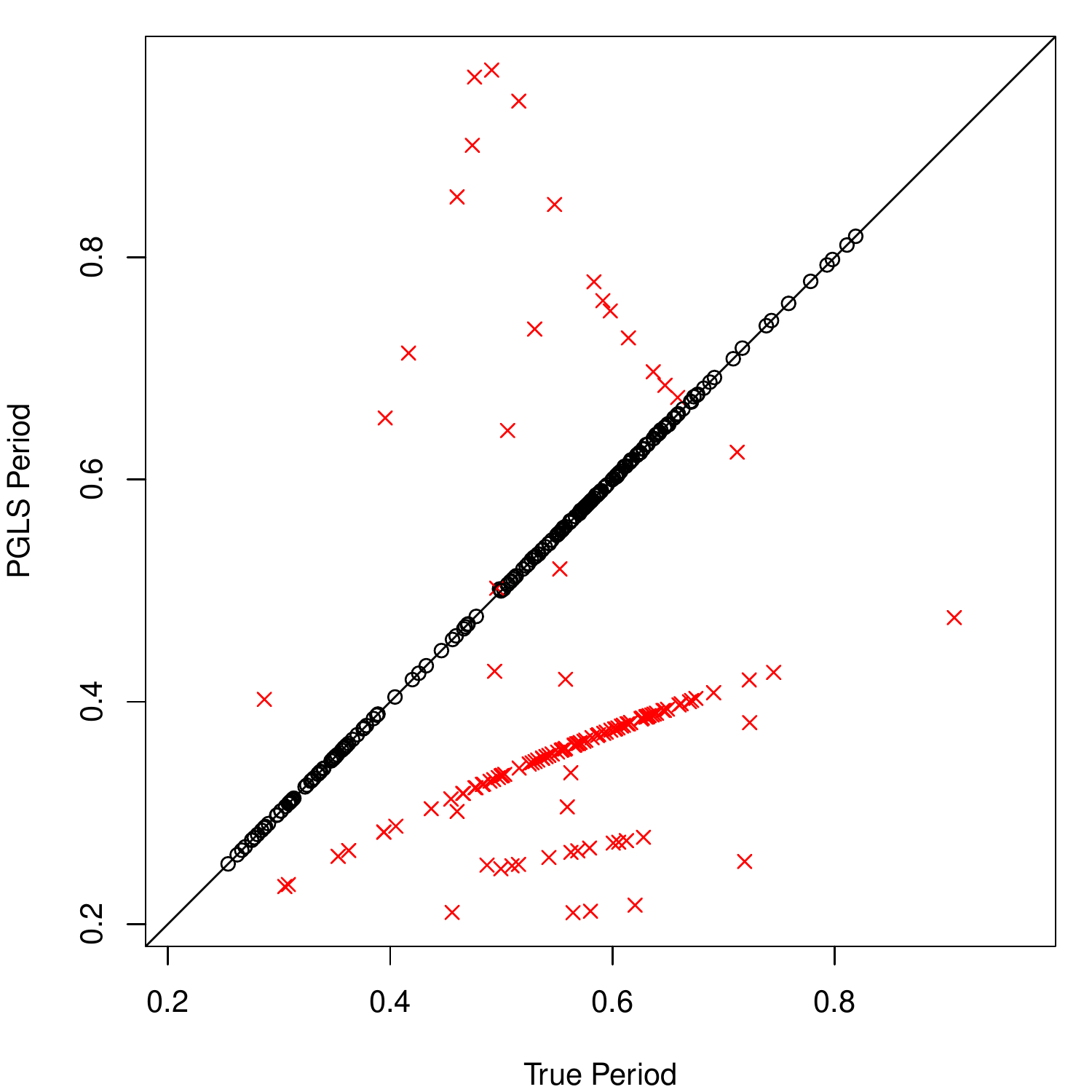}
}
 \caption{Scatterplots of true period versus estimate for MGLS and PGLS with 5, 10, and 15 observations per band for the test set.\label{fig:sdss_accuracies}}
\end{figure}
Table \ref{tab:sdss_tab} shows the fraction of period estimates within 1\% of their true value of MGLS and PGLS. PGLS increases the fraction of periods estimated correctly for 5 and 10 observations per band. Scatterplots of true period versus period estimate for MGLS and PGLS are contained in Figure \ref{fig:sdss_accuracies}. In general, the improvement of PGLS over MGLS appears less for the SDSS data here than the simulated data.

The scatterplots in Figure \ref{fig:sdss_accuracies} suggest that both MGLS and PGLS are not converging to the truth for some stars. In particular, Figures \ref{fig:sdss_pvsp15gls} and \ref{fig:sdss_pvsp15pgls} show that period estimates are either very nearly correct (on the identity line) or some non--linear function of the true period (visible below the identity line). The reason for this phenomenon is likely that RR Lyrae stars exhibit some degree of non--sinusoidality, causing the MGLS period estimate to not be consistent for certain stars. Since PGLS estimates become MGLS estimates as the number of observations increases (and the likelihood washes out the penalty), PGLS is also not consistent for these stars. PGLS appears to increase the rate of convergence of the period estimates to the limiting (in number of observations per band) MGLS value, even when this limit is an incorrect period estimate. For instance, in Figure \ref{fig:sdss_pvsp5pgls} PGLS period estimates concentrate around the non--linear function of the true period more than MGLS estimates in Figure \ref{fig:sdss_pvsp5gls}.

In Figure \ref{fig:sdss_gls_pgls_rho_correlations} we plot phase correlations using MGLS and PGLS for 5 and 15 measurements per band. For both 5 and 15 observations per band, PGLS produces realistic phase estimates concentrated around the identity line. With 15 observations per band the concentration exhibited by MGLS and PGLS is quite close. This is evidence that the penalty terms $J_1$ and $J_2$ are working well.
\begin{figure}[t]
\centering
 \subfloat[5 observations.]{\label{fig:sdss_phases_5}
\includegraphics[scale=.42]{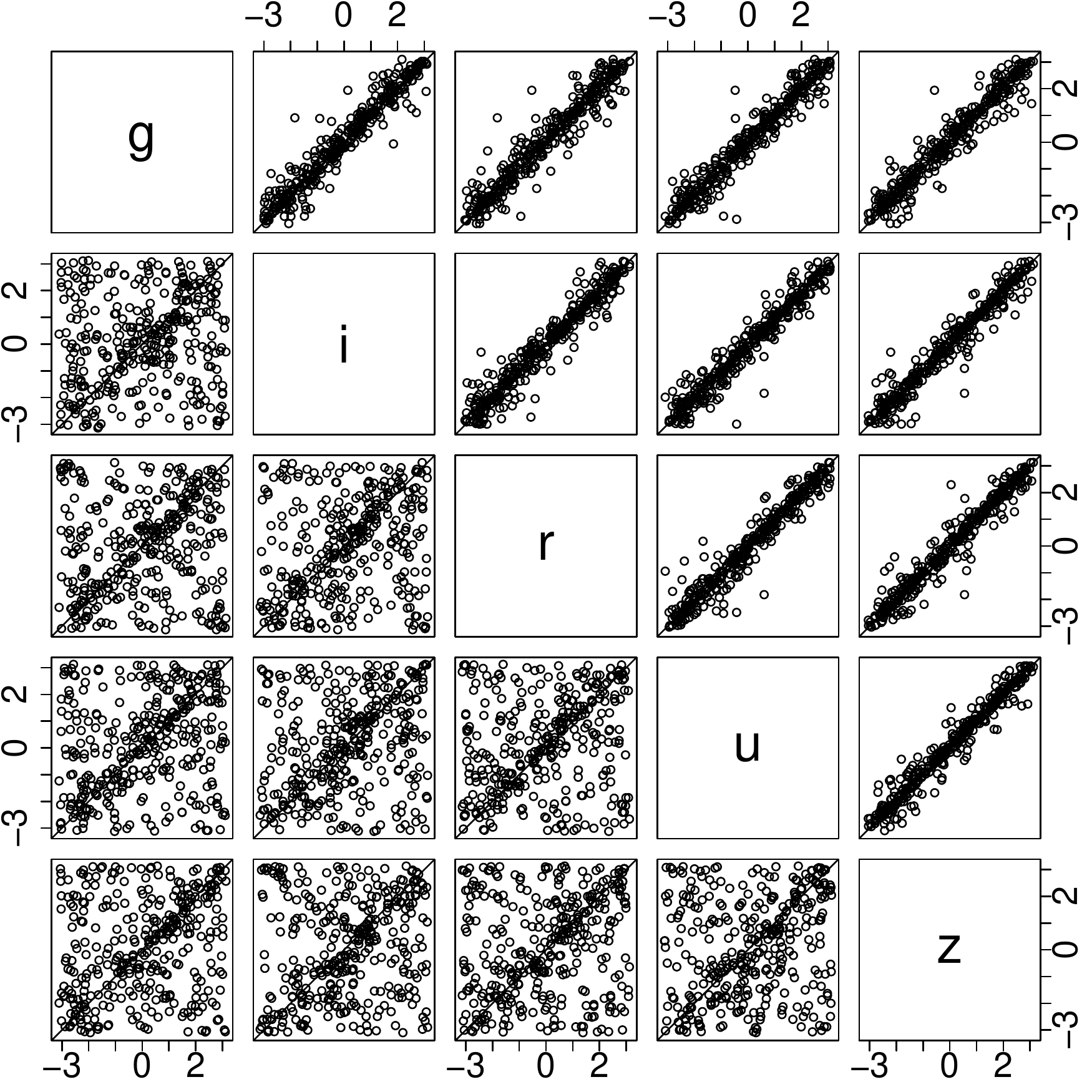}
 } 
 \subfloat[15 observations.]{\label{fig:sdss_phases_15}
\includegraphics[scale=.42]{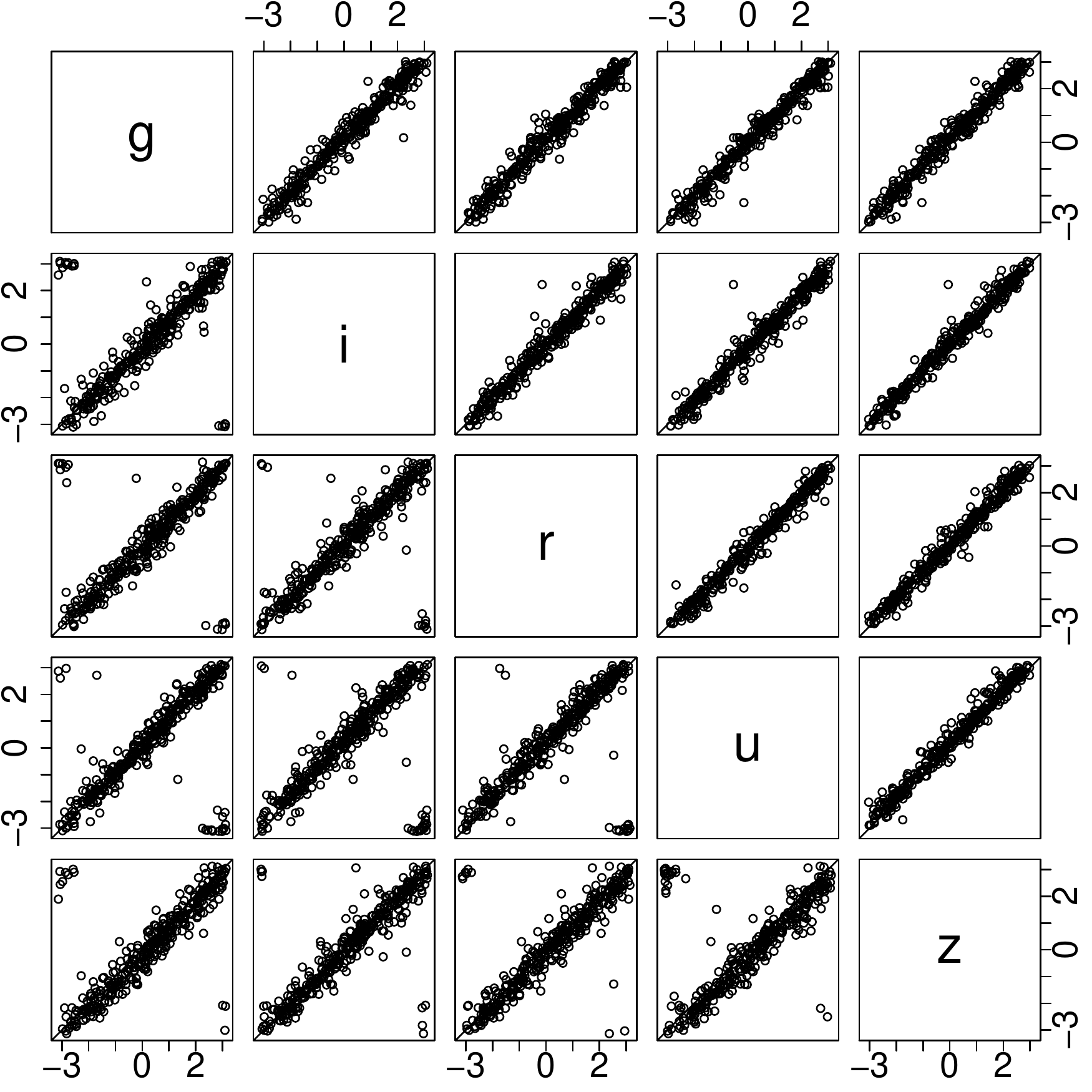}
}
 \caption{Correlations in phase estimates with (a) 5 observations per band and (b) 15 observations per band for the SDSS-II. The results are similar to simulated data. In particular, MGLS period estimates with 5 observations per band show little clustering around $\V{1}$. The PGLS estimates are more physically realistic.\label{fig:sdss_gls_pgls_rho_correlations}}
\end{figure}

\section{Conclusions}
\label{sec:conclusions}

Multiband period estimation is a fundamental problem in modern astronomy. Current methods, however, fail when light curves are poorly--sampled in all measurement bands. To address this deficiency, we introduced two new methods MGLS and PGLS. Both methods generalize a well--known approach to period estimation in astronomy. However, PGLS is the first multiband period estimation algorithm for variable stars that uses known correlations in amplitude and phase across bands to improve accuracy. With simulations and real data, PGLS outperformed MGLS.

Computing the PGLS estimate requires minimizing a more complicated function than computing the MGLS estimate. Nonetheless, the PGLS estimate can be rapidly obtained using our inexact BCD algorithm which requires computational effort that scales linearly with the size of the data. We also showed that even faster PGLS estimates can be obtained by first estimating the computationally cheaper MGLS estimator at different candidate periods and using these estimates to reduce drastically the number of candidate periods on which to compute the PGLS estimate.

We emphasize that both methods, and PGLS in particular, advance the state-of-the-art in multiband period estimation where there are no practical existing alternatives. To the best of our knowledge, even our simple generalization MGLS has not been considered in the astronomy literature. Consequently, we anticipate that further work on our approaches can yield even more accurate estimates.

For example, PGLS was highly effective in simulations where the underlying sinusoidal model is correct. Thus, PGLS will be effective for light curves that are roughly sinusoidal, such as Type II Cepheid variables. In our real data experiments, however, non--sinusoidal behavior in light curve variation decreased period estimation accuracy. This suggests exploring models of light curve variation that are non--sinusoidal.

Furthermore, the penalty terms we developed increased the rate of convergence to the limiting maximum likelihood estimates, suggesting that our penalty terms combined with a non--sinusoidal base model may be effective at estimating periods for poorly--sampled stars that have non--sinusoidal behavior.

The software for implementing MGLS and PGLS can be found in the {\tt multiband} package for R and will be available on CRAN.

\appendix

\section{Majorization in phase update}
\label{sec:mm}

We provide the proof of \Prop{majorization}.
\begin{proof}
Since the function we wish to majorize is Lipschitz differentiable, we can apply the quadratic upper bound principle to construct a convex quadratic majorization \citep{BLin1988}. The gradient and Hessian of $f_b(\rho)$ are given by
\begin{eqnarray*}
f'_b(\rho) 
& = & \VE{a}{b} \langle \VE{a}{b} \V{s}_{b}(\rho) - \V{\mu}_{b}, \M{W}_b\V{c}_{b}(\rho) \rangle
\end{eqnarray*}
and
\begin{eqnarray*}
f''_b(\rho)
& = & \VE{a}{b} \left [\langle \VE{a}{b} \V{c}_{b}(\rho), \M{W}_b\V{c}_{b}(\rho) \rangle + \langle \V{\mu}_{b} - \VE{a}{b} \V{s}_{b}(\rho), \M{W}_b\V{s}_{b}(\rho) \rangle \right].
\end{eqnarray*}
It is straightforward to establish the following global upper bound on the Hessian. Let $\kappa_b = \V{1}\Tra \M{W}_b \V{1}$.
\begin{equation}
\label{eq:bound}
\begin{split}
f''_b(\rho)
\amp \leq \amp &
\VE{a}{b} \left [\VE{a}{b} \V{c}_{b}(\rho)\Tra\M{W}_b\V{c}_{b}(\rho) + \langle \M{W}_b\V{\mu}_{b},\V{s}_{b}(\rho) \rangle \right ] \\
\amp \leq \amp &
\VE{a}{b}  \left [\VE{a}{b} \kappa_b + \langle \M{W}_b\V{\mu}_{b},\V{s}_{b}(\rho) \rangle \right ] \\
\amp \leq \amp &
\VE{a}{b}  \left [\VE{a}{b} \kappa_b + \lVert \M{W}_b\V{\mu}_{b} \rVert_2 \lVert \V{s}_{b}(\rho) \rVert_2 \right ] \\
\amp \leq \amp &
\VE{a}{b}  \left [\VE{a}{b} \kappa_b + \sqrt{n_b}\lVert \M{W}_b\V{\mu}_{b} \rVert_2 \right ].
\end{split}
\end{equation}
The exact second order Taylor expansion of $f_b(\VE{\rho}{b})$ about a point $\tilde{\rho}$ is given by
\begin{eqnarray}
\label{eq:Taylor}
f_b(\rho) & = & f_b(\tilde{\rho}) + f'_b(\tilde{\rho})(\rho - \tilde{\rho}) + \frac{1}{2}f''_b(\rho^\star)(\rho - \tilde{\rho})^2,
\end{eqnarray}
where $\rho^\star \in \alpha \rho + (1-\alpha)\tilde{\rho}$ for some $\alpha \in (0,1)$. The relations (\ref{eq:bound}) and (\ref{eq:Taylor}) together yield the desired result.
\end{proof}

\section{Convergence of the BCD-MM algorithm}
\label{sec:convergence}

We provide the proof of \Prop{convergence}.

\begin{proof}
The convergence theory of monotone algorithms, like BCD and MM, hinge on the properties of the algorithm map $\psi(\V{x})$ that returns the next iterate given the last iterate. For easy reference, we state a simple version of Meyer's monotone convergence theorem \citep{Meyer1976}, which is instrumental in proving convergence in our setting.

\begin{theorem}\label{thm:MM_limit_points}
  Let $f(\V{x})$ be a continuous function on a domain $S$ and
   $\psi(\V{x})$ be a continuous algorithm map from $S$ into $S$ satisfying
 $f(\psi(\V{x})) < f(\V{x})$ for all $\V{x} \in S$ with $\psi(\V{x}) \neq \V{x}$.
  Suppose for some initial point $\V{x}_{0}$ that the set
  $\mathcal{L}_f(\V{x}_{0}) \equiv \{\V{x} \in S : f(\V{x}) \leq f(\V{x}_{0}) \}$ is compact.
Then
  \begin{inparaenum}[(a)]
  \item \label{part:fixed_points} 
    all cluster points are fixed points of $\psi(\V{x})$, and
  \item \label{part:successive_iterates}
     $\lim_{m \to \infty} \lVert \V{x}_{m+1} - \V{x}_{m} \rVert = 0$.
  \end{inparaenum}
\end{theorem}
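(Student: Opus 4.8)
The plan is to analyze the sequence of iterates $\V{x}_{m+1} = \psi(\V{x}_m)$ generated from $\V{x}_0$, exploiting the interplay between the monotone decrease of $f$ along this sequence and the compactness of the sublevel set $\mathcal{L}_f(\V{x}_0)$. First I would observe that the descent hypothesis gives $f(\psi(\V{x})) \leq f(\V{x})$ for every $\V{x}$ (with equality only possible when $\psi(\V{x}) = \V{x}$), so the scalar sequence $\{f(\V{x}_m)\}$ is non-increasing. Consequently every iterate satisfies $f(\V{x}_m) \leq f(\V{x}_0)$, i.e. $\V{x}_m \in \mathcal{L}_f(\V{x}_0)$; since this set is compact, $f$ restricted to it is bounded below, and the monotone bounded sequence $\{f(\V{x}_m)\}$ converges to some limit $f^\star$.

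For part (\ref{part:fixed_points}), let $\V{x}^\star$ be any cluster point, with $\V{x}_{m_k} \to \V{x}^\star$ along a subsequence (such cluster points exist because the iterates lie in the compact set $\mathcal{L}_f(\V{x}_0)$). Continuity of $f$ yields $f(\V{x}^\star) = \lim_k f(\V{x}_{m_k}) = f^\star$. Continuity of $\psi$ yields $\V{x}_{m_k+1} = \psi(\V{x}_{m_k}) \to \psi(\V{x}^\star)$, and applying continuity of $f$ again gives $f(\psi(\V{x}^\star)) = \lim_k f(\V{x}_{m_k+1}) = f^\star = f(\V{x}^\star)$, since $\{f(\V{x}_{m_k+1})\}$ is a subsequence of the convergent sequence $\{f(\V{x}_m)\}$. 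If $\psi(\V{x}^\star) \neq \V{x}^\star$, the strict descent hypothesis would force $f(\psi(\V{x}^\star)) < f(\V{x}^\star)$, contradicting the equality just derived; hence $\psi(\V{x}^\star) = \V{x}^\star$.

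For part (\ref{part:successive_iterates}), I would argue by contradiction using the continuous displacement map $\V{x} \mapsto \lVert \psi(\V{x}) - \V{x} \rVert$. If $\lVert \V{x}_{m+1} - \V{x}_m \rVert$ did not tend to $0$, there would be an $\epsilon > 0$ and a subsequence along which $\lVert \psi(\V{x}_{m_k}) - \V{x}_{m_k} \rVert \geq \epsilon$; by compactness I may pass to a further subsequence converging to some cluster point $\V{y}$, and continuity of the displacement map would give $\lVert \psi(\V{y}) - \V{y} \rVert \geq \epsilon > 0$, so $\V{y}$ is not a fixed point. This contradicts part (\ref{part:fixed_points}), which identifies every cluster point as a fixed point, completing the argument.

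The argument is a clean compactness-plus-monotonicity exercise, and I do not expect a serious obstacle. The one point requiring care is ensuring that the shifted subsequence $\{f(\V{x}_{m_k+1})\}$ shares the same limit $f^\star$, since this is precisely what lets the \emph{strict} descent hypothesis bite in part (\ref{part:fixed_points}); continuity of both $f$ and $\psi$ is used essentially, and strictness of the descent inequality is indispensable, as without it a cluster point need not be a fixed point.
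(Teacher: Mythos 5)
Your proof is correct, and there is nothing in the paper to compare it against: the paper states this result as a simplified version of Meyer's monotone convergence theorem and cites \cite{Meyer1976} without reproducing a proof. Your argument is the standard one for this point-to-point, continuous-map version: monotone descent plus compactness of $\mathcal{L}_f(\V{x}_0)$ gives $f(\V{x}_m) \downarrow f^\star$; continuity of $f$ and $\psi$ transfers this limit to any cluster point $\V{x}^\star$ and to $\psi(\V{x}^\star)$, at which point strict descent forces $\psi(\V{x}^\star) = \V{x}^\star$; and part (b) follows from part (a) by the continuity of the displacement map $\V{x} \mapsto \lVert \psi(\V{x}) - \V{x} \rVert$ together with compactness. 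You correctly flag the one delicate step, namely that $\{f(\V{x}_{m_k+1})\}$ inherits the limit $f^\star$ because it is a subsequence of the convergent full sequence, which is exactly what lets the strict inequality produce a contradiction. It is worth noting that Meyer's original theorem is more general (it treats point-to-set algorithm maps, where continuity must be replaced by closedness and the argument for (b) is more involved), so your proof establishes the simplified statement as given rather than the full result in the cited reference --- but that is all the paper's statement requires.
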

In the context of \Alg{BCDMM}, the function $f$ is the PNLL \eqref{eq:multiband_problem}. Since \Alg{BCDMM} optimizes over the triple $\V{x} = (\V{\beta_0},\V{a},\V{\rho})$ for a fixed candidate frequency $\omega$, we take the set $S = \Real \times \Real \times [-\pi/2,\pi/2]$ . 

We first use \Thm{MM_limit_points} to establish that the iterates of the inexact BCD algorithm tend towards the fixed points of an algorithm map. 
We then show that the fixed points of the algorithm map correspond to the stationary points of the PNLL.

In order to apply \Thm{MM_limit_points}, we need to i) identify a continuous algorithm map $\psi(\V{x})$ that corresponds to \Alg{BCDMM}, ii) check that $f(\V{x}) > f(\psi(\V{x}))$ if $\V{x} \not = \psi(\V{x})$, and iii) identify an $\V{x}_0 \in S$ so that the set $\mathcal{L}_f(\V{x}_0)$ is compact. 

The first step is to identify an algorithm map $\psi(\V{x})$ and check that it is continuous.
We formalize how to obtain $\V{x}^+ \equiv \psi(\V{x})$ from $\V{x}$ via the composition of three sub-maps, each of which corresponds to a block variable update.
\begin{eqnarray*}
\psi_1( \V{x} ) & = & \argmin{\V{\beta_0}} \ell(\V{\beta_0},\V{a},\V{\rho}), \\
\psi_2( \V{x} ) & = & \argmin{\V{a}} \ell(\V{\beta_0}, \V{a}, \V{\rho}) + \lambda_1 J_1(\V{a}), \\
\psi_3( \V{x} ) & = & \argmin{\V{\rho}'} g(\V{\rho}' \mid \V{\rho}) + \lambda_2 J_2(\V{\rho}').
\end{eqnarray*}
Then the algorithm map that corresponds to the BCD algorithm is
\begin{eqnarray*}
\psi(\V{x}) & = & \begin{pmatrix}
\psi_1(\V{\beta_0},\V{a},\V{\rho}) \\
\psi_2(\psi_1(\V{\beta_0},\V{a},\V{\rho}),\V{a},\V{\rho}) \\
\psi_3(\psi_1(\V{\beta_0},\V{a},\V{\rho}), \psi_2(\psi_1(\V{\beta_0},\V{a},\V{\rho}),\V{a},\V{\rho}),\V{\rho}) \\
\end{pmatrix}.
\end{eqnarray*}
Inspecting the block updates \Eqn{update_beta0}, \Eqn{update_a}, and \Eqn{update_rho_explicit}, we see that the sub-maps $\psi_1, \psi_2,$ and $\psi_3$ are each continuous. Therefore, the composition map $\psi$ is also continuous.

The second step is to verify that $f(\V{x}) > f(\psi(\V{x}))$ if $\V{x} \not = \psi(\V{x})$. Consider the intermediate iterates
\begin{eqnarray*}
\V{\beta_0}^+ & \equiv & \psi_1( (\V{\beta_0}, \V{a}, \V{\rho}) ) \\
\V{a}^+ & \equiv & \psi_2( (\V{\beta_0}^+, \V{a}, \V{\rho}) ) \\
\V{\rho}^+ & \equiv & \psi_3( (\V{\beta_0}^+, \V{a}^+, \V{\rho}) ).
\end{eqnarray*}
For any $\V{x} \in S$, we have
\begin{eqnarray*}
f(\V{x}) \amp \geq \amp f( (\V{\beta_0}^+, \V{a}, \V{\rho})) \amp \geq \amp f((\V{\beta_0}^+, \V{a}^+, \V{\rho})) \amp \geq \amp f(\V{x}^+).
\end{eqnarray*}
If, however, $\V{x}$ is not a fixed point of $\psi$, namely $\psi(\V{x}) \not = \V{x}$, then at least one of the above inequalities is strict and therefore, $f(\V{x}) > f(\psi(\V{x}))$.

The third step is to identify an $\V{x}_0 \in S$ so that the set $\mathcal{L}_f(\V{x}_0)$ is compact. Note that $f$ blows up for $\lVert \V{A} \rVert \rightarrow \infty$ or $\lVert \V{\beta_0} \rVert \rightarrow \infty$. Therefore, the set $\mathcal{L}_f(\V{x}_{0})$ is compact for any initial $\V{x}_0 \in S$.

By \Thm{MM_limit_points}, it follows that the cluster points of the iterates of \Alg{BCDMM} are fixed points of the algorithm map $\psi(\V{x})$ corresponding to \Alg{BCDMM}.
To complete the proof we just need to show that every fixed point of $\psi(\V{x})$ is a stationary points of the PNLL \eqref{eq:multiband_problem}.
If $\V{x}$ is a fixed point of $\psi$ then,
\begin{eqnarray*}
\frac{\partial}{\partial \V{\beta_0}}f(\V{x}) & = & \V{0}, \\
\frac{\partial}{\partial \V{a}}f(\V{x}) & = & \V{0}, \\
\frac{\partial}{\partial \V{\rho}} \left [g(\V{\rho} \mid \V{\rho}) + \lambda_2 J_2(\V{\rho}) \right ] & = & \V{0}. \\
\end{eqnarray*}
In light of \Eqn{maj_stationarity} it is clear that the last condition is equivalent to
\begin{eqnarray*}
\frac{\partial}{\partial \V{\rho}} f(\V{x}) & = & \V{0}.
\end{eqnarray*}
Therefore, every cluster point of the iterate sequence generated by \Alg{BCDMM} is a stationary point of the PNLL \eqref{eq:multiband_problem}.
\end{proof}

%

\bibliographystyle{imsart-nameyear}
\bibliography{refs}

\end{document}